\newtheorem{theorem}{Theorem}
\newtheorem{lemma}{Lemma}
\newtheorem{corollary}{Corollary}
\newcommand{\poly}{\mathrm{poly}}
\begin{document}

\title{On the expressive power  of quasiperiodic  SFT\footnote{Supported by ANR-15-CE40-0016-01 RaCAF grant}}

\author[1]{Bruno Durand}
\author[2]{Andrei Romashchenko}
\affil[1]{Univ. Montpellier \& LIRMM}
\affil[2]{CNRS \& LIRMM, on leave from IITP RAS}

\maketitle

\begin{abstract}
In this paper we study the \emph{shifts}, 
which are the shift-invariant and topologically closed sets of configurations over a finite alphabet in  $\mathbb{Z}^d$. 
The \emph{minimal shifts} are those shifts in which all configurations contain exactly  the same patterns. 
Two  classes of  shifts play a prominent role in symbolic dynamics, in language theory  and in the theory of computability:
the \emph{shifts of finite type} (obtained by forbidding a finite number of finite patterns) and the \emph{effective shifts}
(obtained by forbidding a computably enumerable set of finite patterns). 
We prove that every effective minimal shift can be represented as a factor of a projective subdynamics on a minimal shift \emph{of finite type} in a bigger (by $1$) dimension.
This result transfers to the class of minimal shifts a theorem by M.~Hochman known for the class of all effective shifts and
thus answers an open question by E.~Jeandel. 
We  prove a similar result for quasiperiodic shifts and also show that there exists a quasiperiodic shift of finite type for which  Kolmogorov complexity  of 
 all patterns of size $n\times n$ is $\Omega(n)$. 
\end{abstract}

\section{Introduction}

The  study of symbolic dynamics  was initially motivated as a discretization of classic dynamical
systems, \cite{morse-hedlund}.  Later,  the focus of attention in this area shifted towards the questions 
related to computability theory. 
The central notion of symbolic dynamics is a \emph{shift} (a.k.a. \emph{subshift}), which is
a set of configurations in $\mathbb{Z}^d$ over a finite alphabet, defined by a set of forbidden patterns.
Two major notions -- two classes of shifts -- play now a crucial role 
in symbolic dynamics: shifts of \emph{finite type} (SFT,  the shift defined by a finite set of forbidden patterns)
and \emph{effective} shifts (a.k.a. \emph{effectively closed}  --- shifts with an enumerable set of forbidden patterns). 
These classes are distinct:
every SFT is 
effective, but in general the reverse implication does not hold. 
However,  the differences between these classes is surprisingly subtle. It is known
that every effective shift can be {simulated} in some sense by an SFT of higher dimension.
More precisely, every effective shift  in $\mathbb{Z}^d$ can be represented as a factor of the projective subdynamics  of 
an SFT of dimension increased by $1$, see \cite{hochman-2,drs,aubrun-sablik}.

Usually, the proofs of computability results in symbolic dynamics involve  sophisticated algorithmic gadgets embedded in dynamical systems.
The resulting constructions are typically intricate and somewhat artificial. 
So, even if the shifts (effective 
or SFT)  in general 
are proven to have a certain algorithmic property, the known proof  may be inappropriate for   ``natural''  dynamical systems. 
Thus, it is interesting to understand the limits of the known algorithmic techniques and find out whether the remarkable
properties of algorithmic complexity can be extended to ``simple'' and mathematically ``natural'' types of shifts.

One of the classic natural types of dynamical systems is the class of \emph{minimal shifts}. Minimal shifts are those containing no proper
shift, or equivalently the shifts  where all configurations have exactly the same patterns. The role minimal shifts play in symbolic dynamics is  
similar to the role simple groups play in group theory (in particular  every  nonempty shift contains a nonempty
minimal shift, see a discussion in \cite{bruno}). Notice that all minimal shifts are quasiperiodic (but the converse is not true).
Intuitively it seems that the structure of a  minimal shift must be  simple (in terms of dynamical systems).

Besides, minimal shifts cannot be ``too complex'' in algorithmic terms. 
Indeed,  it is known that every effective minimal shift has a computable language of patterns, 
and it contains at least one computable configuration \cite{hochman-2}
(which is in general not the case for effective shifts and even for SFT).
Nevertheless,  minimal shifts can have quite nontrivial  algorithmic properties \cite{jeandel-vanier, hochman-vanier}.

We have mentioned above that every effective shift $\cal S$  can be represented as a factor of a projective subdynamics  of 
an SFT ${\cal S}'$ (of higher dimension). In the previously known proofs of this result  \cite{hochman-2,drs,aubrun-sablik}, 
even if  $\cal S$  is  minimal, 
the structure of the corresponding  SFT ${\cal S}'$   (that simulates by its projective subdynamics  the given $\cal S$)  
can be very sophisticated (and far from being  minimal). 
So, a natural question arises (E.~Jeandel, \cite{jeandel-email}): 
is it true that every effective minimal (or quasiperiodic) shift can be represented as a factor of a projective subdynamics on a
minimal (respectively, quasiperiodic) SFT of higher dimension?  In this paper we give a positive answer to that questions.

The full proof of the main result of this paper is rather cumbersome  for the following reason:
we use the  technique of self-simulating tilings
(e.g., \cite{drs, mfcs2015, westrick}) combined  with some combinatorial lemmas on
quasiperiodic configurations.  Unfortunately, there is no simple and 
clean separation between 
the generic technique of self-simulating tilings and
the supplementary features embedded in this type of tilings,  
so  we cannot use the (previously known) technique of self-simulation as a ``black box''.
We have to re-explain  the core techniques of  fixed-point programming embedded in tilings and adjust the  supplementary features within the construction.

While explaining the proofs, we have to balance clarity with formality, and given the usual space limits of 
the conference paper\footnote{A short version of this paper is to be presented at the MFCS~2017.} 
we have to sketch some 
standard parts of the proof.
In Appendix, we provide a somewhat more verbose explanation of the basic construction of the self-simulating tilings.

\subsection{Notation and basic definitions}

Let $\Sigma$ be a finite set (an alphabet). Fix an integer $d>0$. A $\Sigma$-\emph{configuration} is a mapping
$
 \mathbf{f} \ :\ \mathbb{Z}^d \to \Sigma.
$
(i.e.,  a coloring of $\mathbb{Z}^d$ by ``colors'' from $\Sigma$). 
The set of all $\Sigma$-\emph{configurations} is called \emph{the full shift}. 

A $\mathbb{Z}^d$-\emph{shift}  (or just a \emph{shift} if $d$ is clear from the context) is a set of configuration that is 
 (i)~shift-invariant (with respect to the translations along each  coordinate axis), and
 (ii)~closed in Cantor's topology.

 A \emph{pattern} is a mapping from a finite subset in $\mathbb{Z}^d$ to $\Sigma$ (a coloring of a finite set of $\mathbb{Z}^d$). 
 Every shift can be  defined by a set of forbidden finite patterns $F$ (a configuration belongs to the shift if and only if it does
not contain any pattern from $F$).
A shift is called  \emph{effective} (or \emph{effectively closed}) if it can be defined by a computably enumerable  set of forbidden patterns. A shift is called a \emph{shift of finite type} (SFT), if it can be defined by a finite set of forbidden patterns. 

A special class of a $2$-dimensional SFT is defined in terms of \emph{Wang tiles}. In this case we interpret the alphabet $\Sigma$ as a set of \emph{tiles} --- unite squares with colored sides, assuming that all colors belong to some finite set $C$ (we assign one color to each side of a tile, so technically $\Sigma$ is a subset of $C^4$). A (valid) \emph{tiling} is  a set of all configurations
 $
 \mathbf{f}\ :\ \mathbb{Z}^2\to \Sigma
 $
where every two neighboring tiles match, i.e., share the same color on the adjacent sides. 
Wang tiles are powerful enough to simulate any SFT in a  very strong sense: for each SFT $\cal S$ there exists a set of Wang tiles $\tau$ such that the 
set of all $\tau$-tilings is isomorphic to $\cal S$.  In this paper we focus on  tilings since Wang tiles perfectly suit the technique of self-simulation.

A shift $\cal S$ (in the full shift  $\Sigma^{\mathbb{Z}^d}$) can be interpreted as a dynamical system. There are $d$  shifts along each of the coordinates, and each of these shifts  map  $\cal S$ to itself. So, the group $\mathbb{Z}^d$ naturally acts on $\cal S$. 

For any shift $\cal S$  on $\mathbb{Z}^d$ and for any  $k$-dimensional sublattice  $L$ in $\mathbb{Z}^d$, 
the $L$-\emph{projective subdynamics}  ${\cal S}_L$ of $\cal S$ is the set of configurations of $\cal S$ restricted on $L$.
The $L$-projective subdynamics of a $\mathbb{Z}^d$-shift can be understood as a $\mathbb{Z}^k$-shift
(notice that $L$ naturally acts on ${\cal S}_L$).
 In particular, for every $d'<d$ we have a \emph{standard} $\mathbb{Z}^{d'}$-projective subdynamics on the shift $\cal S$ generated by the lattice spanned on the first $d'$ coordinate axis.
In the proofs of Theorems~\ref{thm-main}-\ref{thm-main-min} we deal with the standard $\mathbb{Z}^{(d-1)}$-projective subdynamics on $\mathbb{Z}^d$-shifts.
In this paper we focus mostly on $2$-dimensional shifts and on $\mathbb{Z}^{1}$-projective subdynamics on these shifts.

A configuration $\omega$ is called \emph{recurrent} if every pattern that appears in $\omega$ at least once, must then appear in this configuration
infinitely often. A configuration $\omega$ is called \emph{quasiperiodic} (or \emph{uniformly recurrent}) if every pattern $P$ that appears in $\omega$ at least once, must appear in every pattern $Q$ large enough in $\omega$. Notice that every periodic configuration is also quasiperiodic.
It is easy to see that if a shift $\cal S$ is minimal, then every $\omega \in {\cal S}$ is  quasiperiodic. The converse, in general, is not true.

For a quasiperiodic  configuration $\omega$, its \emph{function of a quasiperiodicity} is a mapping
 $
  \varphi \ : \ \mathbb{N} \to \mathbb{N}
 $  
such that  every finite pattern of diameter $n$  either never appears in  $\omega$, or it appears in every
pattern of size  $\varphi(n)$ in $\omega$, see \cite{bruno}.
(A function of quasiperiodicity is not unique: increasing a function of quasiperiodicity of $\omega$ at each point we get again a function of quasiperiodicity of this configuration.)
Similarly, a shift $\cal S$ has a function of quasiperiodicity $ \varphi$, if $ \varphi$ is a function of a quasiperiodicity
for every configuration  in ${\cal S}$. 

If a shift $\cal S $ is minimal, then all configurations in $\cal S$ have exactly the same finite patterns.
For  every minimal  shift $\cal S$,  the function of quasiperiodicity is finite (for every $n$) and even
computable. Moreover,  for an effective  minimal  shift, the set of all finite patterns (that can appear in any configuration)  is computable, see \cite{hochman-2, ballier-jeandel}. 
From this fact it follows that every effective and minimal  shift  contains a computable  configuration.
Indeed, with an algorithm that checks whether patterns appear in every $\omega\in{\cal S}$, we can  incrementally (and algorithmically) increase a finite pattern, maintaining the property  that this pattern appears in every configuration in $\cal S$.

If a non-minimal effective  shift contains only quasiperiodic configurations, then we can claim that its function of quasiperiodicity is finite and even computable. However,  we cannot guarantee that the set of all finite patterns (that  appear in at least one configuration)  is computable,  see \cite{ballier-jeandel}.

\subsection{The main results}

Our first theorem claims that every effective quasiperiodic $\mathbb{Z}^d$-shift   can be simulated by a quasiperiodic SFT in $\mathbb{Z}^{d+1}$.
\begin{theorem}\label{thm-main}
Let $\cal A$ be an effective quasiperiodic  $\mathbb{Z}^d$-shift  over some alphabet $\Sigma_A$. Then  there exists a quasiperiodic SFT $\cal B$ 
 \textup(over another alphabet $\Sigma_B$\textup) of dimension $d+1$ such that
$\cal A$ is isomorphic to a factor of a $d$-dimensional projective subdynamics on $\cal B$. 
\end{theorem}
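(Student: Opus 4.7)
The plan is to upgrade the classical Hochman--DRS--Aubrun--Sablik simulation theorem (every effective $\mathbb{Z}^d$-shift is a factor of projective subdynamics of an SFT in $\mathbb{Z}^{d+1}$) so that quasiperiodicity is preserved. I would build a two-dimensional Wang tile set $\tau$ (the argument for general $d$ is the same with a bookkeeping overhead) via the fixed-point / self-simulating tiling method, with one horizontal direction carrying the configuration of $\mathcal{A}$ and the vertical direction carrying the computation verifying forbidden patterns. The fact that the target SFT is quasiperiodic will then follow from two complementary facts: (a) the pure self-simulating skeleton is itself quasiperiodic in a strong, effective sense, and (b) the input channels that carry $\omega \in \mathcal{A}$ are made to respect that skeleton in a shift-equivariant way.

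More concretely, I would first recall the hierarchical macro-tile construction: level-$0$ tiles glue into level-$1$ macro-tiles of side $N$, which glue into level-$2$ macro-tiles of side $N^2$, and so on, where each macro-tile of level $k$ runs, inside a reserved computational zone, the universal machine that simulates the rules of level-$k{+}1$. I would embed into every macro-tile, at every level, an \emph{input channel} holding the restriction of a $(d{+}1)$-dimensional configuration to a chosen $d$-dimensional horizontal slab, and I would add rigidity constraints forcing the input channel to be vertically constant (so that the horizontal projection retrieves a single configuration of $\mathcal{A}$). The computation inside a level-$k$ macro-tile then reads a window of length comparable to the macro-tile's side from that input channel and checks, against the enumerator of forbidden patterns of $\mathcal{A}$, that no forbidden pattern occurs. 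By choosing the simulation so that longer and longer prefixes of the enumeration are checked on higher and higher levels, one gets that a $\mathbb{Z}^d$-projection of any configuration of $\mathcal{B}$ lies in $\mathcal{A}$, and conversely every $\omega \in \mathcal{A}$ can be lifted to a configuration of $\mathcal{B}$.

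The main obstacle is quasiperiodicity of $\mathcal{B}$. The skeleton alone is already quasiperiodic because it is self-similar: any finite pattern $P$ of the skeleton is contained in some level-$k$ macro-tile, and a level-$k$ macro-tile occurs in every level-$(k{+}1)$ macro-tile, hence eventually in every sufficiently large window; an explicit quasiperiodicity function $\varphi_{\mathrm{sk}}$ for the skeleton can be extracted from the growth rate $N^k$ of the levels. The difficulty is that a pattern $P$ in $\mathcal{B}$ now records not only a chunk of skeleton but also the particular alignment of $\omega$ against it. Two patterns that are congruent on the skeleton side can differ on the $\mathcal{A}$-side, and recurrence of this combined data requires that the same $\omega$-window reappear in $\omega$ at the correct offset relative to a recurring piece of skeleton. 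To arrange this I would prove a combinatorial lemma roughly of the form: if $\omega$ is quasiperiodic with function $\varphi_{\mathcal{A}}$ and the skeleton is quasiperiodic with function $\varphi_{\mathrm{sk}}$, then for any window size $n$ the combined skeleton-plus-$\omega$ pattern of diameter $n$ reappears within distance $\psi(n) := \varphi_{\mathrm{sk}}\!\bigl(\max(n,\varphi_{\mathcal{A}}(n))\bigr)$ times a constant, because the recurrences of the skeleton chunk give a sparse but syndetic set of candidate positions, and among these one can locate a translate at which the $\omega$-content also matches by using the uniform recurrence of the relevant $\omega$-window together with the fact that skeleton recurrences shift $\omega$ by controlled amounts.

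Finally, I would combine these pieces: take $\mathcal{B}$ to be the SFT of all valid $\tau$-tilings with the input channel filled by some lift of a configuration of $\mathcal{A}$. The argument for factoring is routine (the factor map forgets the skeleton and projects onto the input coordinate). The argument for $\mathcal{B}$ being nonempty and for every $\omega \in \mathcal{A}$ being realized uses, as usual, a compactness / limit argument on finite patches. The delicate step, and the one I expect to absorb the bulk of the work, is the quasiperiodicity estimate: verifying that the hierarchical fixed-point tilings can be tuned so that the synchronization offset between $\omega$ and the skeleton does not blow up the quasiperiodicity function, and in particular that no ``special'' height or position in the vertical direction of $\mathbb{Z}^{d+1}$ behaves differently from the others. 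This is exactly why the self-simulation cannot be used as a black box and why the construction has to be re-opened, as the authors forewarn.
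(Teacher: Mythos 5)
Your plan correctly identifies the overall architecture (fixed-point tiling, an input channel that is vertically constant, growing hierarchical verification of forbidden patterns, and the need for a combinatorial recurrence lemma to synchronize the skeleton with the embedded $\omega$). However, there is one genuine gap that, in the paper's own proof, absorbs most of the work in Section~5.

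You write that ``the pure self-simulating skeleton is itself quasiperiodic in a strong, effective sense\ldots{} a level-$k$ macro-tile occurs in every level-$(k{+}1)$ macro-tile.'' This is false for the full tileset. It is true only for the literally ``skeleton'' cells of a macro-tile — those carrying nothing but coordinates. The cells that lie on a communication wire or inside the computation zone of a level-$(k{+}1)$ macro-tile carry computation-dependent bits, and a $2\times 2$ pattern showing, say, a particular Turing-machine step simply does not recur in other level-$(k{+}1)$ macro-tiles, which are running different steps. The standard fixed-point construction is therefore \emph{not} quasiperiodic; obtaining quasiperiodicity requires re-opening the construction and enforcing extra structural properties: isolating the wires so a small window sees at most one of them, encoding macro-colors so that each bit position occurs with both values somewhere nearby, and — the key move — planting, in a reserved ``free'' region above each computation zone, a fresh artificial $2\times 2$ \emph{slot} for every possible $2\times 2$ pattern that the computation zone could contain (using $2\times 2$-determinacy to force that exact pattern into the slot). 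That is what makes patterns touching the computation zone recur everywhere. Your proposal omits this entirely, and without it the quasiperiodicity estimate cannot be made to go through.

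Your combinatorial lemma is also not quite in the right form. The skeleton patterns recur on an arithmetic progression (positions $\equiv 0 \pmod{L_k}$ roughly), so what you need is not a vague ``the two recurrences can be synchronized'' but the precise statement the paper isolates as Lemma~\ref{lemma-quasiperiodic-times-periodic}: any factor of a quasiperiodic sequence recurs with a gap divisible by a prescribed modulus $q$, with a uniform bound on the gap. This requires an actual pigeonhole argument (find $q$ nested reoccurrences whose shifts agree mod $q$ and sum them), and it cannot be deduced from the two quasiperiodicity functions $\varphi_{\mathcal{A}}$, $\varphi_{\mathrm{sk}}$ alone. Without this lemma the horizontal-shift half of the recurrence argument has no support.
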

\noindent
A similar result holds for  effective minimal shifts:
\begin{theorem}
\label{thm-main-min}
For every effective minimal  $\mathbb{Z}^d$-shift $\cal A$  there exists a minimal SFT  $\cal B$ 
 in $\mathbb{Z}^{d+1}$ such that
$\cal A$ is isomorphic to a factor of a $d$-dimensional  projective subdynamics on  $\cal B$. 
\end{theorem}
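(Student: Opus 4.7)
The plan is to build on the construction used for Theorem~\ref{thm-main} and add a mechanism that upgrades quasiperiodicity to minimality. The extra ingredient to exploit is pointed out in the introduction: for an effective minimal shift $\cal A$, the language $L(\cal A)$ is not merely co-c.e.\ but \emph{computable}, and $\cal A$ contains a computable configuration. This asymmetry with the quasiperiodic case (where the language need not be computable) is exactly what should allow us to force a uniform language on $\cal B$.

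First I would reuse, with modifications, the self-simulating hierarchical-tiling construction from Theorem~\ref{thm-main}: it produces a $(d+1)$-dimensional SFT $\cal B$ together with a sliding-block factor map from its standard $d$-dimensional projective subdynamics onto $\cal A$. The extra dimension hosts a computation tape that (in the original construction) enumerates forbidden patterns of $\cal A$ and enforces their absence in the $d$-dimensional slices, while the nested macro-tile structure guarantees that every slice is quasiperiodic.

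The key new step is to replace the one-sided enumeration of forbidden patterns by a \emph{two-sided} computation that also enumerates $L(\cal A)$, and to use this to enforce pattern occurrences. Concretely, inside each level-$k$ macro-tile I embed a check demanding that its $d$-dimensional footprint contains as a subpattern every element of $L(\cal A)$ of diameter at most $n_k$, where $n_k \to \infty$ is a schedule slow enough that the footprint at level $k$ is large enough to accommodate all such patterns. Because $\cal A$ is minimal, the canonical computable configuration of $\cal A$, lifted through the self-simulating structure, satisfies all these extra constraints; this gives non-emptiness of $\cal B$. On the other hand, any valid configuration of $\cal B$ then has the property that every pattern of $L(\cal A)$ eventually appears in every sufficiently large window of every $d$-dimensional slice, which, together with the factor map already provided by the Theorem~\ref{thm-main} construction, ensures that the projective subdynamics factors onto a minimal subshift that must coincide with $\cal A$.

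The main obstacle is that minimality is a property of $\cal B$ itself as a $(d+1)$-dimensional shift, not merely of its projective subdynamics: we must ensure that all patterns appearing anywhere in any configuration of $\cal B$ appear in every sufficiently large region of every configuration. The computation layer in the extra direction is where differences between configurations of $\cal B$ can live (different tape contents at a given coordinate, different alignments of the macro-tile hierarchy), so I would augment the fixed-point programming so that (i)~the hierarchical layout in the $(d+1)$-th direction is itself quasiperiodic with a uniform language, using the same ``test for every pattern of size $\leq n_k$'' trick now applied to the computation alphabet, and (ii)~the enumeration of $L(\cal A)$ and of $L(\cal B)$'s own computable ``local language'' is carried out synchronously across the hierarchy, so that configurations cannot diverge in which computational patterns they realize. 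Combining uniform quasiperiodicity in the $(d+1)$-st direction with the $L(\cal A)$-enforcement in the slices yields that $\cal B$ is quasiperiodic and all its configurations share a single language, i.e.\ $\cal B$ is minimal.
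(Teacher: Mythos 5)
There is a genuine gap, and it is exactly the one the paper isolates at the start of its own proof. Your extra constraints control the \emph{language of the $d$-dimensional slices} (and, in (i), the language of the $(d+1)$-st-direction structure taken alone), but they do not control the \emph{relative alignment} between the embedded sequence and the macro-tile grid. That alignment is the real degree of freedom that breaks minimality. Consider the toy example the paper gives: $\cal A$ generated by a periodic word of period $t>1$ with $t\mid L_k$. Every valid configuration of $\cal B$ then has the same slice language (the $t$ rotations of the periodic word), and your requirement ``every level-$k$ footprint contains every $L(\cal A)$-pattern of diameter $\le n_k$'' is satisfied for \emph{every} offset of the periodic word against the grid. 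Yet two tilings whose embedded sequences are offset by different residues modulo $t$ realize different $2\times 2$-blocks of $k$-level macro-tiles (the macro-tile's ``conscious memory'' records which letters sit in which columns), so $\cal B$ is not minimal. Demanding lots of slice-patterns to occur cannot rule out these phase shifts, and your point (ii) --- ``synchronize enumeration of $L(\cal A)$ and $L(\cal B)$'' --- is not a concrete local mechanism (indeed $L(\cal B)$ is only defined once $\cal B$ is, so it cannot be fed back into the tile rules without a fixed-point argument you have not supplied).

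The paper's fix is different in kind and you would do well to compare it with your attempt: it fixes once and for all a \emph{canonical computable configuration} $\mathbf{x}\in\cal A$ and a \emph{standardly aligned} macro-tile grid, forms their direct product (the ``standard embedding''), and then the fixed-point machine in each macro-tile checks not merely that the delegated chunk avoids forbidden $\cal A$-words, but that it occurs \emph{in a position modulo $L_k$ permitted by the standard embedding}. This is where Lemma~\ref{lemma-minimal-times-periodic} is essential and your proof never invokes it: part~(b) makes the position-check computable (so it can be run in the computation zone), and part~(a) guarantees that the product of $\mathbf{x}$ with the periodic grid is again minimal, which is what makes the resulting $(d+1)$-dimensional SFT minimal rather than merely quasiperiodic. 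In short, the missing idea is not ``enforce occurrence of more patterns'' but ``pin down one alignment class and make every tiling locally conform to it,'' and the combinatorial lemma about products of a minimal shift with a periodic one is the tool that makes this both enforceable and non-empty.
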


In the proof of Theorems~\ref{thm-main}-\ref{thm-main-min} 
we deal with the projective subdynamics of $\cal B$ corresponding to the lattice on the first $d$ coordinate axis.
So, more technically,  Theorem~\ref{thm-main} claims that 
there exists a projection $\pi \ : \ \Sigma_B\to \Sigma_A$ such that for every configuration
 $
  \textbf{f}\ : \ \mathbb{Z}^{d+1} \to \Sigma_B
 $
from $\cal B$ and for all $i_1,\ldots,i_d,j,j'$  
we have $\pi (\textbf{f}(i_1,\ldots,i_d,j)) = \pi(\textbf{f}(i_1,\ldots,i_d,j'))$ 
(i.e., the projection $\pi$ takes a constant value  along   each column  $(i_1,\ldots,i_d,*)$), and the resulting $d$-dimensional configuration 
$
 \{ \pi(\textbf{f}(i_1,\ldots,i_d,*)) \}
$
belongs to $\cal A$; moreover, each configuration of $\cal A$ can be represented in this way by some configuration of $\cal B$.  
Informally, we say that each  configuration from $\cal B$ ``encodes'' a configuration from $\cal A$, and each configuration from $\cal A$ is encoded by some configuration from $\cal B$. 

\smallskip

Theorem~\ref{thm-main} implies the following  somewhat surprising corollary 
(a quasiperiodic  $\mathbb{Z}^2$-SFT can have highly ``complex'' languages of patterns):
\begin{corollary}\label{thm-kolmogorov}
There exists a quasiperiodic SFT $\cal A$ of dimension $2$ such that
Kolmogorov complexity of every $(N\times N)$-pattern in every configuration of  $\cal A$ is   $\Omega(N)$.
\end{corollary}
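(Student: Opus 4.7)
The plan is to deduce the corollary directly from Theorem~\ref{thm-main}, by feeding it a sufficiently complex one-dimensional input and transporting the Kolmogorov-complexity lower bound through the factor map. First I would fix an effective quasiperiodic (even minimal) $\mathbb Z$-shift $\mathcal A \subseteq \Sigma_A^{\mathbb Z}$ and a constant $c>0$ with the property that every length-$N$ factor appearing in any $\omega\in\mathcal A$ has Kolmogorov complexity at least $cN$. The existence of such an $\mathcal A$ is a known technical fact (in the spirit of Durand--Levin--Shen and Rumyantsev style constructions of computable sequences with incompressible factors, followed by a quasiperiodization): one builds a computable bi-infinite sequence whose every factor has linear Kolmogorov complexity and then realises it inside a minimal effective shift, paying with a very fast-growing function of quasiperiodicity.

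Next I would apply Theorem~\ref{thm-main} with $d=1$ to this $\mathcal A$. This produces a quasiperiodic SFT $\mathcal B \subseteq \Sigma_B^{\mathbb Z^2}$ and a letter projection $\pi:\Sigma_B\to\Sigma_A$ such that for every $\mathbf f\in\mathcal B$ the value $\pi(\mathbf f(i,j))$ depends only on $i$, and the resulting $1$-dimensional configuration $i\mapsto \pi(\mathbf f(i,\cdot\,))$ lies in $\mathcal A$. I claim that this $\mathcal B$ already witnesses the corollary.

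To verify the complexity bound, take any $N\times N$ pattern $P$ occurring in a configuration of $\mathcal B$. Applying $\pi$ coordinate-wise yields an $N\times N$ pattern over $\Sigma_A$ that is constant along every column; its top row is a factor $Q$ of length $N$ of some $\omega\in\mathcal A$. Since $\pi$ is a fixed finite object and the projection/row extraction is a trivial algorithm, $K(Q)\le K(P)+O(1)$. By the choice of $\mathcal A$ we get $K(P)\ge K(Q)-O(1)\ge cN-O(1)=\Omega(N)$, uniformly in $P$.

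The main obstacle is the first step: producing the effective quasiperiodic $\mathbb Z$-shift $\mathcal A$ whose every factor of length $N$ is Kolmogorov-hard. This sits on a delicate boundary, because every effective minimal $\mathbb Z$-shift has a computable language and even contains a computable configuration, so one might fear that factors could be described succinctly by their position in such a configuration. The point is precisely that the function of quasiperiodicity of $\mathcal A$ is allowed to grow extremely fast, so the bit-cost of encoding a position of an $N$-factor inside any fixed computable configuration blows up past $cN$, leaving room for every finite pattern to remain incompressible. Once $\mathcal A$ is in hand, Theorem~\ref{thm-main} does all the remaining work for free.
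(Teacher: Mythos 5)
Your overall route is exactly the paper's: cite a known one--dimensional quasiperiodic shift all of whose length-$N$ factors have complexity $\Omega(N)$ (the paper invokes the Rumyantsev--Ushakov result), plug it into Theorem~\ref{thm-main} with $d=1$, and transport the lower bound through the letter projection $\pi$. That transport step, $K(Q)\le K(P)+O(1)$ for the row $Q$ obtained from the block $P$, is correct and is implicit in the paper's one-line proof.

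However, your parenthetical ``(even minimal)'' and the surrounding justification contain a genuine error. An effective minimal $\mathbb Z$-shift necessarily has a computable language and therefore contains a computable configuration $\mathbf x$ (the paper recalls this, and it is the content of its Remark~2). For such an $\mathbf x$ the prefix $\mathbf x_{[0:N-1]}$ has Kolmogorov complexity $O(\log N)$: to describe it you only need the program computing $\mathbf x$ (a constant) plus the integer $N$. No growth rate of the function of quasiperiodicity can rescue this --- the position used is $0$, so no ``positional bit-cost'' is incurred. Your suggested intermediate object, ``a computable bi-infinite sequence whose every factor has linear Kolmogorov complexity,'' does not exist for the same reason. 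The shift $\mathcal S$ you need must be effective and quasiperiodic but \emph{non-minimal}, with a non-computable language and no computable configuration; this is exactly the kind of shift Rumyantsev--Ushakov build and the paper cites, and it is why the paper's Remark~2 explicitly says Corollary~\ref{thm-kolmogorov} cannot be strengthened to minimal SFT. If you simply delete the minimality claim and the flawed hedge in your last paragraph, what remains is the paper's proof.
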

In other words, a quasiperiodic  $\mathbb{Z}^2$-SFT can have highly ``complex'' languages of patterns.

\emph{Remark 1:}  A standalone pattern of size $N\times N$ over an alphabet $\Sigma$  (with at least two letters) can have a Kolmogorov
complexity up to $\Theta(N^2)$. However, this density of information cannot be enforced by  local rules, because in every SFT in $\mathbb{Z}^2$ there exists
a  configuration such that Kolmogorov complexity of all $N\times N$-patterns is bounded by $O(N)$, \cite{dls}. Thus, the  lower bound $\Omega(N)$
 in Corollary~\ref{thm-kolmogorov} is optimal in the class of all SFT.

\emph{Remark 2:} Every effective (effectively-closed) minimal  shift $\cal A$ is \emph{computable}: given a pattern, we can algorithmically decide whether it belongs to the configurations of the shift 
(which is in general not the case for effective quasiperiodic shift). 
Patterns of high Kolmogorov complexity cannot be found algorithmically. So Corollary~\ref{thm-kolmogorov} cannot be extended to the class of minimal SFT.

\smallskip

To simplify notation and make the argument more visual, in what follows we focus on the case $d=1$. The proofs extend to  any $d>1$ in a straightforward way, \emph{mutatis mutandis}.

\smallskip

The rest of the paper is organized as follows. In Section~2 we briefly remind the core technique of the self-simulating tilings from \cite{drs}
(see also a more detailed explanation in Appendix). In Section~3 we explain how to ``simulate'' any given effective $\mathbb{Z}$-shift in
a $\mathbb{Z}^2$-SFT (the simulated one-dimensional shift is obtained as a factor of the standard  projective subdynamics on the constructed
$\mathbb{Z}^2$-SFT);  the technique discussed in this section also previously appeared in \cite{drs}. In Section~4 we formulate and prove two combinatorial lemmas concerning
quasiperiodic sequences (this part of the argument can be understood independently of the previous sections). At last, in Section~5 we
combine the proven combinatorial lemmas with the technique from \cite{mfcs2015} (enforcing quasiperiodicity or minimality in a self-simulating tiling).
This combination results in a proof of our main results, Theorem~\ref{thm-main} and Theorem~\ref{thm-main-min}.

\section{The general framework of self-simulating SFT}

In what follows we extensively use  the technique of \emph{self-simulating tilesets} from \cite{drs} (this technique  goes back to \cite{gacs-self-similar-automata}). We use the idea of {self-simulation} to enforce  a kind of   \emph{self-similar} structure in a tiling.  In this section we remind the reader of the principal ingredients of this construction.

Let $\tau$ be a tileset and $N>1$ be an integer.  We call a $\tau$-\emph{macro-tile} an $N \times N $ square  correctly tiled by tiles from $\tau$. 
Every side of a $\tau$-macro-tile contains  a sequence of $N$ colors (of tiles from $\tau$); we refer to this sequence as a \emph{macro-color}.
A tileset $\tau$ \emph{simulates} another tileset $\rho$, if there exists a set of $\tau$-{macro-tiles} $T$ such that
 \begin{itemize}
\item there is one-to-one correspondence between $\rho$ and $T$ (the colors of two tiles from $\rho$ match if and only if the macro-colors 
 of the corresponding macro-tiles from $T$ match), 
 \item  for every $\tau$-tiling there exists a unique lattice of vertical and horizontal lines that splits this tiling into $N\times N$ macro-tiles from $T$,
 i.e., every $\tau$-tiling represents a unique $\rho$-tiling.
 \end{itemize}
For a large class of sufficiently ``well-behaved'' sequence of integers $N_k$ we can construct a family of tilesets $\tau_k$ ($i=0,1,\ldots$)  such that each $\tau_{k-1}$ simulates the next $\tau_{k}$ with the zoom $N_k$ (and, therefore, $\tau_0$ simulates every $\tau_k$ with the zoom $L_k = N_1\cdot N_2 \cdots N_{k}$).

If a $k$-level macro-tile $M$ is a ``cell'' in a $(k+1)$-level macro-tile $M'$, we refer to  $M'$ as a \emph{father} of $M$; we call the $(k+1)$-level macro-tiles neighboring $M'$ \emph{uncles} of $M$.

 In our construction each tile of $\tau_k$  ``knows'' its coordinates modulo $N_k$ in the tiling:  the colors on the left and on the bottom sides should involve $(i,j)$, the color on the right side should involve $(i+1\mod N_k, j)$, and the color on the top side, respectively, involves $(i, j+1 \mod N_k)$. 
So every $\tau_k$-tiling can be uniquely split into blocks (macro-tiles) of size $N_k\times N_k$, where the coordinates of cells range from $(0,0)$ in the bottom-left corner to $(N-1,N-1)$ in top-right corner. Intuitively, each tile ``knows'' its position in the corresponding macro-tile.

\begin{wrapfigure}{l}{0.48\textwidth}
 \vspace{-5pt}
\includegraphics[scale=0.48]{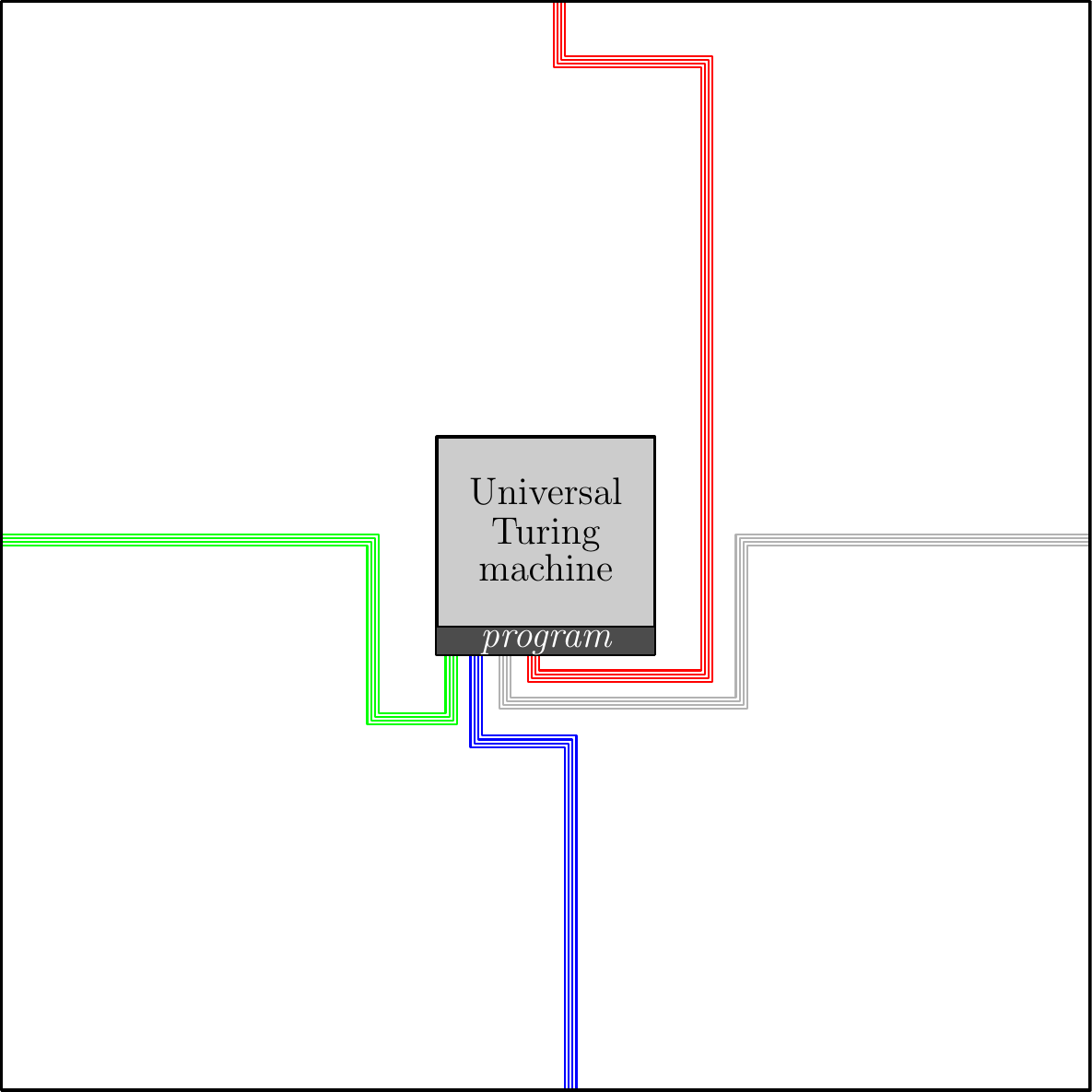}
\label{pic4}
 \vspace{-7pt}
\caption{The structure of a macro-tile.}
\vspace{-5pt}
 \end{wrapfigure}

In addition to the coordinates, each tile in $\tau_k$  has some supplementary information   encoded in the colors on its sides (the size of the supplementary information is always bounded by $O(1)$). 
In the middle of each side of a macro-tile we allocate $s_k \ll N_k$ positions  where an array of $s_k$ bits 
represents a color of a tile from $\tau_{k+1}$
(these $s_k$ bits are embedded  in colors on the sides of $s_k$ tiles  of a macro-tile, one bit per a cell). 
We fix some cells in a macro-tile that serve as  ``communication wires'' 
and then require that these tiles carry the same (transferred) bit on two sides (so the bits of ``macro-colors'' are transferred from the sides of macro-color towards its central part). The central part of a macro-tile (of size, say $m_k\times m_k$, where $m_k = \poly(\log N_k)$)  is a \emph{computation zone}; it represents a space-time diagram of a universal Turing machine (the tape is horizontal, time goes up), see Fig.~1.

The first line of the computation zone contains the following \emph{fields} of the \emph{input data}:
 \begin{itemize}
 \item[(i)] the program of a Turing machine $\pi$ that verifies that a quadruple of macro-colors correspond to one valid macro-color,
 \item[(ii)] the binary expansion of the integer rank $k$ of this macro-tile,
 \item[(iii)] the bits encoding the macro-colors ---  the position inside the ``father'' macro-tile of rank $(k+1)$ (two coordinates modulo $N_{k+1}$) and 
 $O(1)$ bits of the supplementary information assigned to  the macro-colors.
 \end{itemize}
We require that the simulated computation terminates in an accepting state (if not, no correct tiling can be formed).
The simulated computation guarantees that macro-tiles of level $k$ are isomorphic to the tiles of $\tau_{k+1}$. Notice that on each level
$k$ of the hierarchy we simulate in macro-tiles a computation of one and the same Turing machine $\pi$. Only the inputs for this machine
(including the binary expansion of the rank number $k$) varies on different levels of the hierarchy. 

This construction of a tileset can be implemented
using the standard technique of self-referential programming, similar to the Kleene recursion theorem, as it is shown in \cite{drs}.
The construction works if the size of a macro-tile (the  zoom factor $N_k$) is large enough. First, we need enough space in a macro-tile to ``communicate''  $s_k$ bits  from each macro-colors to the computation zone; second, we need  a large enough computation zone, so all accepting computations  terminate in time $m_k$ and on space $m_k$. In what follows we assume that $N_k = 3^{C^k}$ for some large enough $k$. 

\section{Embedding a bi-infinite sequence into a self-simulating tiling}

In this section we adapt the technique from  \cite{drs} and explain how to ``encode'' in a self-simulating tiling a bi-infinite sequence, 
and provide to the computation zones of macro-tiles of all ranks an access to the letters of the embedded sequences.

We are going to embed in our tiling a bi-infinite sequence $\mathbf{x} = (x_i)$ over an alphabet $\Sigma$.
To this end we assume that each $\tau$-tile ``keeps'' a letter from $\Sigma$ that propagates without change in the
vertical direction.  Formally speaking, a letter from $\Sigma$ should be a part of the top and bottom colors of every $\tau$-tile 
(the letters assigned to both sides of a tile must be equal to each other).  We want to guarantee that a $\Sigma$-sequence
 can be embedded in a $\tau$-tiling, if and only if it belongs to some fixed effective $\cal A$. 
 (We postpone to Section~5 the discussion of \emph{quasiperiodicity} of embedded shifts.)

  \begin{wrapfigure}{l}{0.55\textwidth}
  \vspace{0pt}
\includegraphics[scale=0.49]{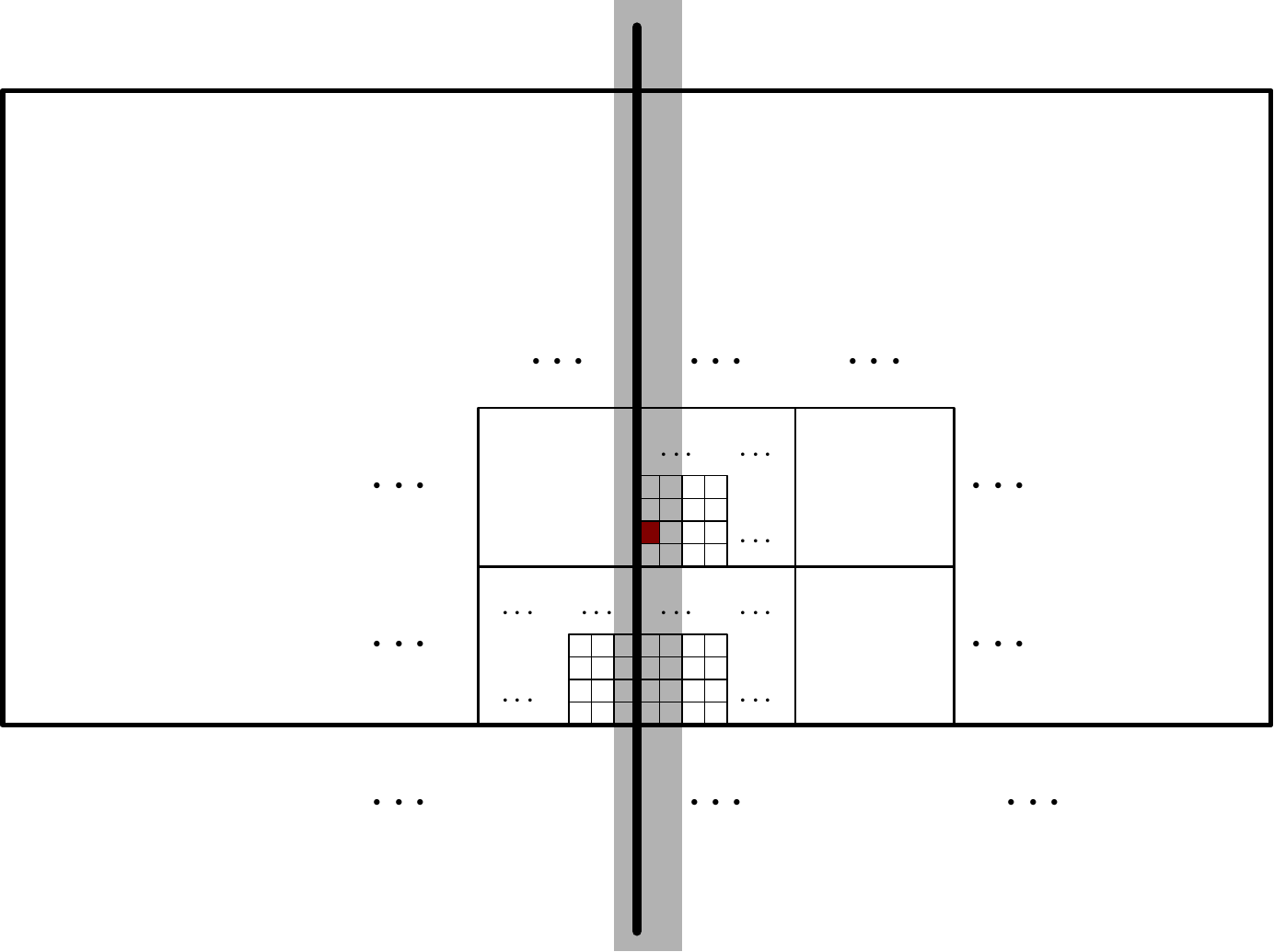}
\caption{The zone of responsibility (the grey vertical stripe)  for a macro-tile (the red square) is $3$ times wider than the macro-tile itself.}
\vspace{-8pt}
\end{wrapfigure}

We want to ``delegate'' the factors of the embedded sequence to the computation zones of macro-tiles,
where these factors will be validated (that is, we will check that they do not contain any forbidden subwords). 
While using tilings with growing zoom factor, we can guarantee that  the size of the computation zone of a $k$-rank macro-tile grows with the rank $k$. 
So we have at our disposal the computational resources suitable to run all necessary validation tests on the embedded sequence.
It remains to organize the propagation of the letters of the embedded sequence to the ``conscious memory'' 
(the computation zones) of macro-tiles of all ranks. In what follows we explain how this propagation is organized.

 \emph{Zone of responsibility  of macro-tiles.}\label{letter-delegation}
In our construction,  a macro-tile of level $k$ is a square of size $L_k\times L_k$, with 
$L_k=N_1\cdot N_2\cdot \ldots\cdot N_{k}$ (where $N_i$ is the zoom factor on level $i$ of the hierarchy of macro-tiles).
We say that a $k$-level macro-tile is \emph{responsible} for the letters of the embedded sequence $\mathbf{x}$
assigned to the columns of (ground level) tiles of this macro-tile as well as  to the columns of macro-tiles of the same rank 
on its left and on its right.
That is, the \emph{zone of responsibility} of a $k$-level macro-tile is a factor of length $3L_k$
from the embedded sequence, see Fig.~2. (The zones of responsibility of any two horizontally neighboring macro-tiles overlap.)

\smallskip
\emph{Letters assignment:}\label{letter-assignment}
The computation zone of a $k$-level macro-tile (of size $m_k\times m_k$) is too small to contain all letters from its zone 
of responsibility. So we require that the computation zone obtains as an input a (short enough) chunk of letters from its zone 
of responsibility. Let us say, that it is a factor of length $l_k = \log \log L_k$ from the stripe of $3L_k$ columns constituting the zone
of responsibility of this macro-tile. We say that this chunk is \emph{assigned} to this macro-tile.

The infinite stripe of vertically aligned $k$-level macro-tiles share the same zone of responsibility.  However, different macro-tiles 
in such a stripe will obtain different assigned chunks. The choice of the assigned chunk varies from $0$ to $(3L_{k}-l_k)$. We need
to choose a position of a factor of length $l_k$ in a word of length $L_k$.  Let us say  for certainty that for a macro-tile $M$ of rank $k$
the first position of the assigned chunk (in the stripe of length $3L_k$) is defined as the vertical position of $M$ in the bigger macro-tile
of  rank $(k+1)$ (modulo $(3L_{k}-l_k)$). 

\emph{Remark:} We have chosen the zoom factors $N_k$ so that $N_{k+1}\gg 3L_k$. Hence, every chunk of length $l_k$ from
a stripe of width $3L_k$ is assigned to some of the macro-tiles ``responsible''  for these $3L_k$ letters. 
Since the zones of responsibility of neighboring $k$-level macro-tiles overlap by more than $l_k$, every finite factor of length $l_k$ in the embedded sequence $\mathbf{x}$  is assigned  to some $k$-level macro-tile (even if it involves columns of two macro-tiles of rank $k$).

\smallskip

\emph{Implementing the letters assignment by self-simulation.} In the \emph{letters assignment} paragraph above we presented some requirements --- how the data must be propagated from the ground level (individual tiles) to $k$-level macro-tiles.  Technically, for each $k$-level macro-tile $\cal M$ we specified which chunk of the embedded sequence should be a part of the data fields on the computation zone of  $\cal M$. So far we have not explained how the assigned chunks arrive  to the high-level data fields. 
 Now, we are going to explain how to implement the desired scheme of letter assignment in a self-simulating tiling. 
 Technically, we append to the input data of the computation zones of macro-tiles some supplementary data fields:
 \begin{itemize}
 \item[(iv)] the block of $l_k$ letters from the embedded sequence assigned to this macro-tile,
 \item[(v)] three blocks of bits of $l_{k+1}$ letters  of the embedded sequence assigned to this ``father'' macro-tile,
 and two ``uncle''  macro-tiles (the left and the right neighbors of the ``father''),
 \item[(vi)] the coordinates of the ``father'' macro-tile in the ``grandfather'' (of rank $(k+2)$). 
 \end{itemize}
Informally, each $k$-level macro-tile must check that the data in the fields (iv), (v) and (vi) is consistent. That is, if some letters from the fields 
(iv) and (v) correspond to the same vertical column (in the zone of responsibility), then these letters must be equal to each other. Also,
if a $k$-level macro-tile plays the role of cell in the computation zone of the $(k+1)$-level father, it should check the consistency of
its (v) and (vi) with the bits displayed in  father's computation zone.  Finally, we must ensure the coherence of  the fields (v) and (vi)
for each pair of neighboring $k$-level macro-tiles; so this data should make a part of the macro-colors.

Notice that the data from ``uncles'' macro-tiles is necessary to deal with the letters from the columns that physically  belong to the
neighboring macro-tiles. So the consistency of the fields (v) is imposed also on  neighboring  $k$-level macro-tiles that belong to 
different  $(k+1)$-level fathers (the boarder line between these  $k$-level macro-tiles is also the boarder line between their fathers).

The  computations  verifying the coherence of the new fields can be performed in polynomial time, and 
the required update of the construction fits the constraints on the parameter. 
See a more detailed discussion on ``letter delegation''  in \cite[Section~7]{drs}.

\smallskip

\emph{Final remarks: testing against forbidden factors.}
To guarantee that the embedded sequence $\mathbf{x}$ contains no forbidden patterns, each $k$-level 
macro-tile should allocate some  part of its computation zone to enumerate (within the limits of available space and time) 
the forbidden pattern, and verify that the block of $l_k$ letters assigned to this macro-tile contains none of the  found forbidden factors.

The time and space allocated to  enumerating the  forbidden words  grow as a function of $k$. To ensure that the embedded sequence
contains no forbidden patterns,  it is
enough to guarantee that each forbidden pattern is found by macro-tiles of high enough rank, and every factor of the embedded sequence
is compared (on some level of the hierarchy) with every forbidden factor.
Thus, we have a general construction of a $2$D tiling that simulates a given, effective $1$D shift. 
In the next sections we explain how to make these tilings quasiperiodic in the case when the simulated $1$D shift is also quasiperiodic.

\section{Combinatorial lemmas: the direct product of quasiperiodic and periodic sequences}
The technique  from \cite{drs} allows to embed in a self-similar tiling a $1$-dimensional sequence and handle factors of this sequence. However, the previously known constructions cannot guarantee minimality or quasiperiodicity of the resulting tiling, even if the embedded sequences have very simple combinatorial structure. To achieve the property of  quasiperiodicity we will need some new techniques. The new parts of the argument begins with two  simple combinatorial lemmas concerning quasiperiodic sequences.  
 These lemmas  are isolated results, which can be formulated and proven independently of our main construction. 
\begin{lemma}
\label{lemma-quasiperiodic-times-periodic}
 \textup(see \cite{avgust,salimov}\textup)
 Let $\mathbf{x}$ be a bi-infinite  recurrent sequence,  $v$ be a finite factor  
in $\mathbf{x}$, and $q$ be a positive integer number.
Then there exists an integer $t>0$ such that another copy of $v$ appears in  $\mathbf{x}$ with a shift $q\cdot t$. In other words, there exists
another instance of the same factor $w$ with a shift divisible by $q$.
Moreover, if $\mathbf{x}$ is quasiperiodic, then the gap $q\cdot t$ between neighboring appearances of $v$ is bounded by some number $L$ that depends on $\mathbf{x}$ and $v$ \textup(but not on a specific instance of the factor $x$ in the sequence\textup).
\end{lemma}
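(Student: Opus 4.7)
The plan is to handle the two assertions of the lemma separately: the first is a direct pigeonhole on the occurrences of $v$ modulo $q$, while the second uses a ``packaging'' trick that reduces the desired statement to a single application of quasiperiodicity to a larger factor of $\mathbf{x}$.

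First, for the existence of $t$: since $\mathbf{x}$ is recurrent and $v$ appears at least once in $\mathbf{x}$, the factor $v$ must appear infinitely often, so the set $P_v \subseteq \mathbb{Z}$ of positions where $v$ occurs is infinite. Consider the map $p \mapsto p \bmod q$ from $P_v$ to $\{0, 1, \ldots, q-1\}$. By pigeonhole, at least one residue class is hit infinitely often; pick two distinct positions $p < p'$ in that class. Then $p' - p$ is a strictly positive multiple of $q$, say $p' - p = q \cdot t$, which is the shift we needed.

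Second, for the moreover clause: fix one concrete pair of occurrences of $v$ at positions $p$ and $p + qt$ produced above, and let $w$ be the factor of $\mathbf{x}$ of length $qt + |v|$ that starts at position $p$. Then $w$ is itself a finite factor appearing in $\mathbf{x}$, and by construction every occurrence of $w$ in $\mathbf{x}$ carries inside it two occurrences of $v$ at a relative shift of exactly $qt$, which is divisible by $q$. Now apply the quasiperiodicity of $\mathbf{x}$ to the single factor $w$: there exists $L = \varphi(|w|)$ such that every window of length $L$ in $\mathbf{x}$ contains at least one occurrence of $w$, and hence at least one pair of occurrences of $v$ shifted by a multiple of $q$. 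Taking $L$ as the desired bound, we conclude that the gap between neighboring occurrences of $v$ related by a $q$-divisible shift is at most $L$, and this $L$ depends only on $\mathbf{x}$, on $v$, and on the fixed choice of $t$ (canonically the minimal one delivered by the pigeonhole step, so ultimately only on $\mathbf{x}$ and $v$, with $q$ treated as part of the data).

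I expect no deep obstacle. The only subtlety is that the bound $L$ inherits dependence on the specific pair $(p, p+qt)$ we froze in step two, so one must comment that a canonical choice exists and that $L$ is then a function of $\mathbf{x}$ and $v$ alone, independent of \emph{which} instance of $v$ in $\mathbf{x}$ one starts from; this independence is exactly what quasiperiodicity (applied to $w$) provides.
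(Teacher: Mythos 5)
Your pigeonhole argument for the first assertion is clean, but note that it produces \emph{some} pair of occurrences of $v$ at a $q$-divisible distance, not a pair anchored at a prescribed occurrence. The paper's iterative proof is more elaborate precisely because it delivers the anchored form: fixing the occurrence $v=\mathbf{x}_{[0:N-1]}$, it exhibits another occurrence at a \emph{positive} multiple of $q$ measured from that very occurrence. For the first assertion alone the distinction is arguably a matter of reading the statement, but it matters below.

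The genuine gap is in the \emph{moreover} clause. Freezing one pair $(p,p+qt)$ and packaging it into $w=\mathbf{x}_{[p:p+qt+|v|-1]}$, quasiperiodicity does give a copy of $w$ in every window of length $L=\varphi(|w|)$; but the residue class modulo $q$ of the two $v$'s inside a given occurrence of $w$ is determined by where that copy of $w$ happens to start, and this varies from window to window. So what you actually obtain is: every window of length $L$ contains \emph{some} pair of occurrences of $v$ at a $q$-divisible distance. This does not bound, for a fixed occurrence of $v$ at position $p_0$, the distance to the next occurrence at a position $\equiv p_0 \pmod q$ --- the pairs your windows produce may all lie in other residue classes. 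Yet that anchored bound is exactly what the lemma asserts and what the proof of Lemma~\ref{lemma-minimal-times-periodic}(a) consumes ($v$ must reappear at a position congruent to $m$ modulo $q$ inside every window of length $L$). The ``only subtlety'' you flag concerns the dependence of $L$ on the frozen pair; that is not the difficulty, and quasiperiodicity applied to $w$ does not repair it. In the paper's argument the anchoring is built in: each $l_j$ is a shift of the original occurrence, each is uniformly bounded by iterating the quasiperiodicity function $\varphi$, and the resulting $q$-divisible shift $\sigma=\sum l_{j_i}$ is therefore both anchored at the given occurrence and uniformly bounded. A single application of quasiperiodicity to one super-factor $w$ cannot substitute for this, since you would then need $w$ itself to recur in a prescribed residue class modulo $q$ --- which is the statement being proved, now for $w$ instead of $v$.
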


\noindent
\emph{Notation:}
For a configuration $\mathbf{x}$ (over some finite alphabet) we denote with ${\cal S}(\mathbf{x})$ the shift that consists of all configurations $\mathbf{x}'$ containing only patterns from $\mathbf{x}$. If a shift $\cal T$ is minimal, then ${\cal S}(\mathbf{x}) = {\cal T}$ for all configurations $\mathbf{x}\in {\cal T}$.

\begin{lemma}
\label{lemma-minimal-times-periodic}
(a) Let  $\cal T$ be an effective minimal shift. Then for every  $\mathbf{x}=(x_i)$ from ${\cal T}$ and  every periodic configuration $\mathbf{y}=(y_i)$
the direct product $\mathbf{x} \otimes \mathbf{y}$  \textup(the bi-infinite sequence of pairs $(x_i,y_i)$ for $i\in \mathbb{Z}$\textup) generates a minimal
shift, i.e., ${\cal S}(\mathbf{x} \otimes \mathbf{y}) $  is minimal.
(b) If in addition 
the sequence $\mathbf{x}$ are computable, then  the set of patterns in ${\cal S}(\mathbf{x} \otimes \mathbf{y}) $  is also computable.
\end{lemma}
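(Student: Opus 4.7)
For part (a), I plan to first establish that $\mathbf{x}\otimes \mathbf{y}$ is itself quasiperiodic by a direct application of Lemma~\ref{lemma-quasiperiodic-times-periodic}, and then deduce minimality of $\mathcal{S}(\mathbf{x}\otimes \mathbf{y})$. Concretely, fix an arbitrary factor $v=(u,w)$ of $\mathbf{x}\otimes \mathbf{y}$ appearing at some position $i_0$. Since $\mathbf{y}$ has period $q$, the $\mathbf{y}$-block $w$ is determined solely by $|u|$ and $i_0 \bmod q$, so $v$ recurs in $\mathbf{x}\otimes \mathbf{y}$ precisely at those positions where $u$ recurs in $\mathbf{x}$ with the same residue modulo $q$. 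Applying Lemma~\ref{lemma-quasiperiodic-times-periodic} to $\mathbf{x}$ (which is quasiperiodic, being in a minimal shift) with factor $u$ and integer $q$ bounds the gaps between such same-residue occurrences by some $L=L(\mathbf{x},u,q)$, so $v$ recurs in $\mathbf{x}\otimes \mathbf{y}$ with gaps $\le L$. Hence $\mathbf{x}\otimes \mathbf{y}$ is quasiperiodic. Now for any $\mathbf{z}\in \mathcal{S}(\mathbf{x}\otimes \mathbf{y})$, every factor of $\mathbf{z}$ is by definition a factor of $\mathbf{x}\otimes \mathbf{y}$, and any factor of $\mathbf{z}$ of length $\ge L+|v|$ therefore already contains $v$. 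Thus $v$ occurs in $\mathbf{z}$, and together with the trivial reverse inclusion all configurations of $\mathcal{S}(\mathbf{x}\otimes \mathbf{y})$ share a common language, which is exactly minimality.

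For part (b), my plan is to prove the language of $\mathcal{S}(\mathbf{x}\otimes \mathbf{y})$ decidable by reducing membership of a pattern $p=(u,w)$ to two effective checks. The pattern $p$ belongs to the language iff (i)~$u$ is a factor of $\mathcal{T}$, and (ii)~$u$ occurs in $\mathbf{x}$ at some position whose residue mod $q$ agrees with the residue $r$ that the position of $w$ takes inside $\mathbf{y}$. Check (i) is decidable since $\mathcal{T}$ is effective minimal, so its language is computable. For (ii) I plan to produce an effective upper bound on the Lemma~\ref{lemma-quasiperiodic-times-periodic} constant $L$ by defining $\tilde{L}(u)$ to be the least $M$ such that every factor of $\mathcal{T}$ of length $M$ beginning with $u$ contains a second occurrence of $u$ at distance divisible by $q$. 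By Lemma~\ref{lemma-quasiperiodic-times-periodic} applied to any $\mathbf{x}'\in \mathcal{T}$, together with minimality of $\mathcal{T}$ (so factors of $\mathcal{T}$ coincide with factors of $\mathbf{x}'$), this $\tilde{L}(u)$ is finite; computability of the language of $\mathcal{T}$ then makes $\tilde{L}(u)$ computable by a terminating exhaustive search. It follows that scanning a prefix of the computable sequence $\mathbf{x}$ of length $\tilde{L}(u)+|u|$ exhibits, for every residue $r$ at which $u$ occurs in $\mathbf{x}$, at least one such occurrence. This decides (ii), and hence the language.

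The principal obstacle I anticipate is the computability step in part (b): one has to set up the bound $\tilde{L}(u)$ so that it is simultaneously finite (by Lemma~\ref{lemma-quasiperiodic-times-periodic}), computable (by effectivity of $\mathcal{T}$), and a valid upper bound on the Lemma~\ref{lemma-quasiperiodic-times-periodic} constant $L$ inside our specific $\mathbf{x}$. Verifying these three properties requires weaving together minimality of $\mathcal{T}$ (so that $\mathcal{T}$-factors and $\mathbf{x}$-factors agree), effectivity of $\mathcal{T}$ (so the language is computable), and computability of $\mathbf{x}$ (so that residues of occurrences of $u$ in $\mathbf{x}$ can be read off directly). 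Each link is short, but assembling them cleanly is the delicate part of the argument.
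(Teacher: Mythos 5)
The proposal is correct and follows essentially the same route as the paper: part~(a) uses Lemma~\ref{lemma-quasiperiodic-times-periodic} to show $\mathbf{x}\otimes\mathbf{y}$ is quasiperiodic and deduces minimality from the resulting language equality, and part~(b) reduces membership to testing whether $u$ occurs in $\mathbf{x}$ at a given residue mod $q$, obtains a computable window bound from the computable language of $\mathcal{T}$, and decides by scanning a prefix of the computable $\mathbf{x}$. The only divergence is cosmetic (your $\tilde{L}(u)$ is phrased as a window length rather than a gap bound found by brute-force search), so the two proofs coincide in substance.
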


\noindent
\emph{Remark:} In general, different configurations $\mathbf{x} \in {\cal T}$ in the product with one and  the same periodic $\mathbf{y}$ can result in different
shifts ${\cal S}(\mathbf{x} \otimes \mathbf{y}) $.

\smallskip


\begin{proof}[Proof of lemma~\ref{lemma-quasiperiodic-times-periodic}]
Denote by  $N$ the length of $v$.
We are given that $v$ is a factor of $\mathbf{x}$. W.l.o.g., we may assume that
 $
 v = \mathbf{x}_{[0: N-1]}.
 $
We need to prove that $v$ reappears again in $\mathbf{x}$ with a shift $t\cdot q$, i.e.,  
$
v = \mathbf{x}_{[t q : t q + N-1]}
$
for some $t>0$.

 Since $\mathbf{x}$ is quasiperiodic, there exists an integer $l_1>0$ such that the pattern $v$ appears once again in $\mathbf{x}$ with the shift of size $l_1$ to the right,
  $$v = \mathbf{x}_{[l_1:l_1+N-1]}.$$ 
  If $q$ is a factor of $l_1$, then we are done.  Otherwise (if $q$ is not a factor of $l_1$), we use  quasiperiodicity of $\mathbf{x}$ once again (now for a bigger pattern). From quasiperiodicity it follows that there exists an integer $l_2>0$ such that $\mathbf{x}_{[0:l_1+N-1]}$ appears again in $\mathbf{x}$  with the shift of size $l_2$ to the right,
  $$\mathbf{x}_{[0:l_1+N-1]}= \mathbf{x}_{[l_2:l_1+l_2+N-1]}.$$   
Now we have two new occurrences of $v$ in $\mathbf{x}$,
  $$
  v = \mathbf{x}_{[l_1:l_1+N-1]}= \mathbf{x}_{[l_2:l_2+N-1]}= \mathbf{x}_{[l_1+l_2:l_1+l_2+N-1]}.
  $$
If $q$ is a factor of $l_2$ or $l_1+l_2$, we get a subword in $\mathbf{x}$ (starting at the position $l_2$ or $l_1+l_2$ respectively) that is equal to  $v$. 
Otherwise, we repeat the same argument  again, and find in $\mathbf{x}$ a copy of an even greater pattern $\mathbf{x}_{[0:l_1+l_2+N-1]}$. Repeating this argument $k$ times, we obtain  a sequence of positive integers  $l_1,\ldots, l_k$  such that the word $v$ reappears in $\mathbf{x}$ with all shifts composed of terms $l_i$  (all possible sums of several different $l_i$).  That is, for each integer number
\begin{equation}\label{sum-l}
 \sigma = l_{j_1}+l_{j_2}+\ldots + l_{j_r}
\end{equation}
(composed of $r$ pairwise different indices $1\le j_1< \cdots < j_r\le k$) we have 
  $$
  v= \mathbf{x}_{[\sigma:\sigma+N-1]},
  $$
see Fig.~3.
If the number $k$ is large enough, then $q$ is a factor of at least one of the shifts~(\ref{sum-l}). Indeed, if $k>q(q-1)$ then we can find  $q$ different $l_j$ congruent to each other modulo $q$ (the pigeon hole principle). Then the sum of these $l_j$ must be equal to $0$ modulo $q$.  
\medskip

\begin{figure}
\includegraphics[width=\hsize]{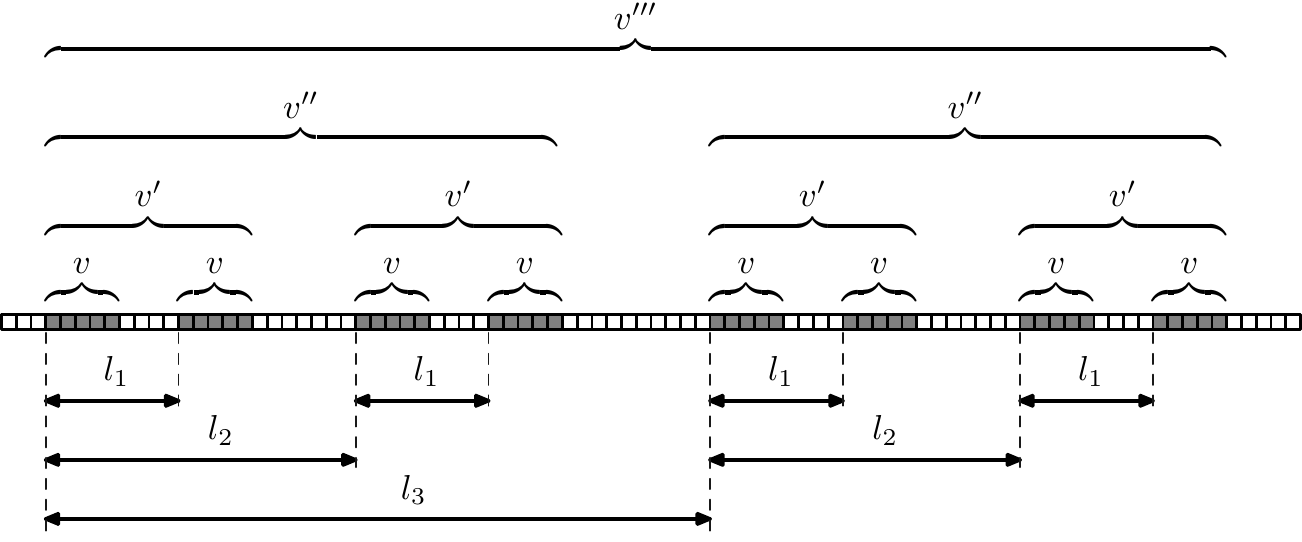}
\caption{Reappearance of factors in a quasiperiodic sequence: at  first we find a reappearance of the factor $v$, then we find a reappearance of the factor $v'$ involving two copies of $v$, then a reappearance of $v''$ involving two copies of $v'$ (and four copies of $v$), etc.}
\end{figure}

\smallskip

So far  we used only the fact that $\mathbf{x}$ is  recurrent. To prove the \emph{moreover} part of the lemma, we notice that for a \emph{uniformly recurrent} $\mathbf{x}$, all integers $l_j$, $j=1,\ldots,k$ in the argument above can be majorized by some  uniform  upper bound that depends only on $\mathbf{x}$ and $N$ (but not on a specific position of the factor $v$ in $\mathbf{x}$ chosen in the first place).  
This observation concludes the proof.

\end{proof}

\begin{proof}[Proof of Lemma~\ref{lemma-minimal-times-periodic}]
(a) Denote by $q$ the period of $\mathbf{y}$.
Since $\cal T$ is  a minimal shift, the configuration $\mathbf{x}$ is recurrent and even quasiperiodic. By Lemma~\ref{lemma-quasiperiodic-times-periodic}, every factor $v=\mathbf{x}_{[m:n]}$ reappears in $\mathbf{x}$ with a shift divisible by $q$. Moreover, a copy of $v$ can be found with a shift divisible by $q$ in every large enough pattern
 $
 \mathbf{x}_{[i: i+L]}
 $
(where the value of $L$ depends on $v$ but not on the specific position of $v$ in $\mathbf{x}$).
It follows that any factor $\tilde v =  (\mathbf{x} \otimes \mathbf{y})_{[m:n]}$ that contains the word $v=\mathbf{x}_{[m:n]}$ in its first component  
reappears in every pattern of size $L=L(v)$ in each sequences of ${\cal S}(\mathbf{x} \otimes \mathbf{y})$.

\smallskip

(b) We need to verify algorithmically  whether a given factor $\tilde v$ appears in $\mathbf{x} \otimes \mathbf{y}$.  In other words, for a given word $v$ and for a given integer $i$ we need to find out whether $v$ appears in $\mathbf{x}$ in a  position congruent to $i\mod q$.

From Lemma~\ref{lemma-quasiperiodic-times-periodic} it follows that 
there exists an $L=L(v)$ such that for every appearance of $v$ in $\mathbf{x}$, 
 this factor reappears in $\mathbf{x}$  in the same position modulo $q$ with a translation at most $L$ to the right. More precisely, 
if $v=\mathbf{x}_{[i:i+|v|-1]}$ for some $i$, than there exists a $\sigma<L$ divisible by $q$ such that $\mathbf{x}_{[i:i+|v|-1]}=\mathbf{x}_{[i+\sigma:i+\sigma+|v|-1]}$.

Since $\cal T$ is minimal, the language of the finite factors of all configurations in $\cal T$  is computable. It follows that the bound $L=L(v)$ defined above is computable. Indeed, by the brute force search we can find the maximal possible gap between two neighboring appearances  (in this subshift) of the word $v$ in  positions congruent to each other modulo $q$.
Thus, to  find out whether $v$ appears in $\mathbf{x}$ in a position congruent $i\mod q$, it is enough to compute the first $L(v)$ letters of the sequence~$\mathbf{x}$.
\end{proof}

\section{Towards quasiperiodic SFT}

In this section we combine the combinatorial lemmas from the previous section with the technique of enforcing quasiperiodicity from \cite{mfcs2015}, and  construct quasiperiodic tilings that simulate quasiperiodic $\mathbb{Z}$-shifts.

\subsection{When macro-tiles are clones of each other}
 
To show that (some) self-simulating tilings enjoy the property of quasiperiodicity, we need a tool to prove that every pattern in a tiling has ``clones'' (equal patterns) in each large enough fragment of this tiling. In our tiling every finite pattern is covered by a block of (at most) four macro-tiles of high enough rank, so we can focus on the search for ``clones'' in macro-tiles. The following lemma gives a natural characterization of the equality 
of two macro-tiles in a tiling: they must have the same information in their  ``conscious memory'' (the data written on the tape of the Turing machine in 
the computation zone) and the same information hidden in their  ``deep subconscious''  (the fragments of the embedded $1$D sequence 
corresponding to the responsibility zones of these macro-tiles must be identical).
  
 \begin{lemma}\label{lemma-clones}
 Two macro-tiles of rank $k$ are equal to each other if and only if they
  (a)~contain the same bits in the fields (i) - (vi) in the input data on the computation zone, and
  (b)~the factors of the encoded sequence corresponding to the zones of responsibility 
  of these macro-tiles \textup(in the corresponding vertical stripes of width $3N_k$\textup) are equal to each other.
 \end{lemma}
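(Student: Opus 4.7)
The forward direction is essentially immediate: the computation zone sits inside the macro-tile, so equality of the two macro-tiles forces equality of the bits in fields (i)--(vi); the embedded letter in each column of the macro-tile is part of the tile data, and the responsibility-zone letters in the uncle columns are fixed by the context in which the macro-tiles are compared, so (b) also follows.

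The substance of the lemma is the converse, which I would prove by induction on the rank $k$. For the base case, a rank-$1$ macro-tile is an $N_1\times N_1$ block of ground tiles: the four boundary macro-colors are determined by field (iii) in (a), the contents of the computation zone are determined by running the fixed Turing machine $\pi$ on the input (a), the communication wires propagate the macro-color bits deterministically between boundary and computation zone, and (b) specifies the embedded letter attached to every column of the macro-tile. Every ground-level tile in the macro-tile is then uniquely pinned down by the colors on its four sides.

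For the inductive step, assume the lemma at rank $k-1$ and let $M$, $M'$ be two rank-$k$ macro-tiles with identical (a) and (b). For each of the $N_k\times N_k$ positions, let $N$ and $N'$ be the corresponding rank-$(k-1)$ sub-macro-tiles; it suffices to show that (a) and (b) at rank $k-1$ hold for $N$ and $N'$, after which the inductive hypothesis gives $N=N'$ and hence $M=M'$. The responsibility zone of $N$ has width $3L_{k-1}\ll 3L_k$ and lies inside the responsibility zone of $M$, so (b) for $N$ is a restriction of (b) for $M$. As for (a): fields (i), (ii) are constants; field (iii) records the geometric position of $N$ inside $M$ (fixed by the location of $N$) together with the $O(1)$ supplementary macro-color bits, which are produced by the Turing computation inside $M$'s computation zone from the input (a); field (vi) of $N$ is literally field (iii) of $M$; and fields (iv), (v) of $N$ are explicit chunks of the embedded sequence at positions computable from the vertical coordinates of $N$ and of $M$, read off from (b) and carried into $N$ by the fixed wiring pattern of $M$ whose contents are determined by (a).

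The main obstacle is the bookkeeping for field (v) of a sub-macro-tile, which nominally contains data about the two rank-$k$ uncles of $M$ that is not syntactically present in the input fields of $M$'s computation zone. The fix is to observe that, by the self-simulation protocol, the bits of this field enter $N$ through the macro-color wires on $M$'s left and right boundaries, and that the computation in $M$ verifies their consistency with the embedded sequence in $M$'s own responsibility zone (available via (b)) and with the boundary macro-colors (available via (a)); so these bits are themselves forced by (a) together with (b), and the inductive step closes. No new combinatorial ingredient is needed beyond tracing the protocol one level at a time.
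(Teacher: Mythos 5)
Your proof is correct and uses the same inductive approach as the paper's: induct on the rank, show that (a) and (b) for a rank-$k$ macro-tile force (a) and (b) for each of its rank-$(k-1)$ cells, and close with the inductive hypothesis. The paper condenses this into a four-sentence sketch (``the communication wires \dots carry the same information bits, their computation zones represent exactly the same computations, etc.''), whereas you unpack the field-by-field bookkeeping and in particular isolate and address the one genuinely delicate point --- the uncle-chunks in field (v), which are not literally among the parent's own input fields --- so your version is a more explicit rendering of the same argument rather than a different one.
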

 \begin{proof}
 Induction by the rank $k$. For the macro-tile of rank $1$ the statement follows directly from the construction. For a pair of macro-tiles $M_1$
 and $M_2$ of rank  $(k+1)$ with identical data in the fields (i) - (vi) we observe that the corresponding ``cells'' in $M_1$ and $M_2$
 (which are macro-tiles of rank $k$) contain the same data in their own   fields (i) - (vi), since  the communication wires of $M_1$ and $M_2$
 carry the same information bits, their computation zones represent exactly the same 
 computations, etc. If the  factors (of length $3L_k$) from the encoded sequences in the zones of responsibility of  $M_1$ and $M_2$  
 are also equal to each other, we can apply the inductive assumption.
 \end{proof}
 
 \subsection{Supplementary features:  constraints that can be imposed on the self-simulating tiling}

The tiles involved in our self-simulating tiles set (as well as all macro-tile of each rank) can be classified into three types:
 \begin{itemize}
 \item[(a)] the ``skeleton'' tiles that keep no information except for their coordinates in the father macro-tile; these tiles work as building blocks of the hierarchical structure;
 \item[(b)] the ``communication wires'' that transmit the bits of macro-colors from the border line of the macro-tile to the computation zone;
 \item[(c)] the tiles of the computation zone (intended to simulate the space-time diagram of the Universal Turing machine). 
 \end{itemize}
 Each pattern that includes only ``skeleton'' tiles (or ``skeleton'' macro-tiles of some rank $k$) reappears infinitely often in all homologous position inside all macro-tiles of higher rank. Unfortunately, this property is not true for the patterns that involve  the ``communication zone'' or the ``communication wires''.  Thus, the general construction of a fixed-point tiling does not imply the property of quasiperiodicity.
 To overcome this difficulty we need some new technical tricks.

We can enforce the following additional properties (p1) - (p4) of a tiling with only a minor modification of the construction:

\smallskip
\noindent
\textbf{(p1)} In each macro-tile, the size of the computation zone $m_k$ is much less than the size of the macro-tile $N$.  In what follows we  need to reserve free space in a macro-tile to insert $O(1)$ (some constant number) of copies of each $2\times 2$ pattern  from the computation zone (of this macro-tile), right above the computation zone. This requirement is easy to meet.
We assume that the size of a computation zone in a $k$-level macro-tile of size $N_k\times N_k$ is only $m_k=\poly(\log N_k)$.
So we can reserve an area of size $\Omega(m_k)$ above the computation zone, which is free of ``communication wires'' or any other functional gadgets 
(so far this area consisted  of only skeleton tiles),
see the ``empty'' hatched area in Fig.~4.

\smallskip
\noindent
\textbf{(p2)}  We require that the tiling inside the computation zone   satisfies the property of $2\times2$-\emph{determinacy}. If we know all the  colors  on the borderline of a $2\times2$-pattern inside of the computation zone (i.e., a tuple of $8$ colors), then we can uniquely reconstruct the $4$ tiles of this pattern. Again, to implement this property we do not need  new ideas; this requirement is met if we represent the space-time diagram of a Turing machine in a natural way.

 \begin{wrapfigure}{l}{0.45\textwidth}
 \vspace{-5pt}
\includegraphics[scale=0.45]{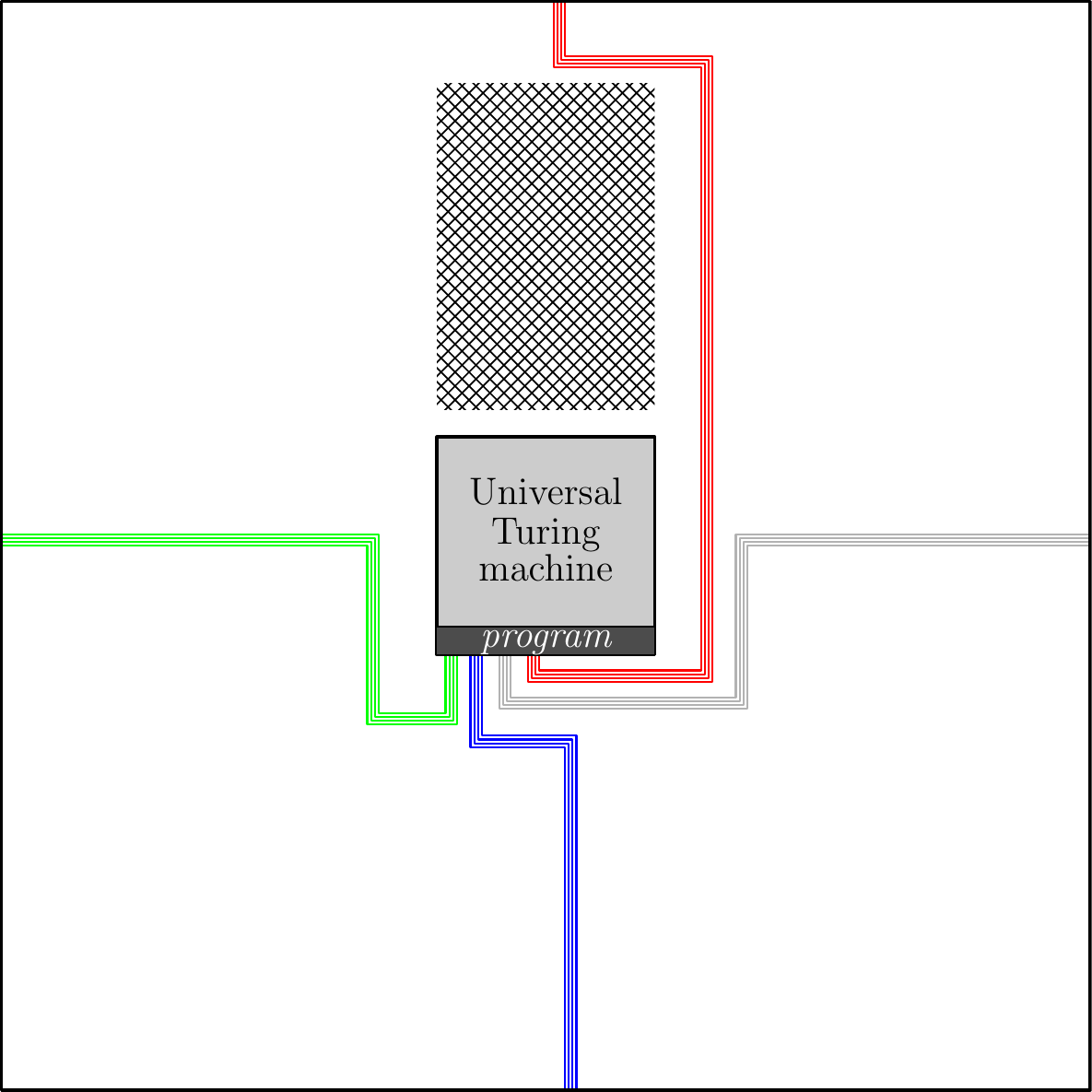}
\label{pic5}
\caption{The ``free'' area reserved above the computational zone}
\vspace{-10pt}
 \end{wrapfigure}

\smallskip
\noindent
\textbf{(p3)}  The communication channels in a macro-tile   (the wires that transmit the information from the macro-color on the borderline of this macro-tile to the bottom line of its computation zone) must be isolated from each other. The distance between every two wires must be greater than $2$. That is, each $2\times 2$-pattern   can touch at most one communication wire. Since the width of the wires in a $k$-level macro-tile is only $O(\log N_{k+1})$, we have enough free space to lay the ``communication cables'', so this requirement is easy to satisfy.

\noindent
\emph{Remark:} Property (p3) is a new feature, it was not used in \cite{mfcs2015} or any other preceding constructions of
self-simulating tilings.

\smallskip
\noindent
\textbf{(p4)}  In our construction the macro-colors of  a $k$-level macro-tile are encoded by bit strings of some length $r_k = O(\log N_{k+1})$.
We  assumed that this encoding is natural in some way. So far the choice of encoding was of small importance; we only required that some natural manipulations with macro-colors can be implemented in polynomial time.
Now, we add another (seemingly artificial) requirement: that each of $r_k$ bits encoding the macro-colors (on the top, bottom, left and right sides of a macro-color) was equal to $0$ and to $1$ for quite a lot of macro-tiles (so the fact that some bit of some macro-color has this or that value, must not be unique in a tiling). Technically, we require an even stronger property: at every position $s=1,\ldots, r_k$ and for every $i=0,\ldots,N_{k+1}-1$ there must exist  $j_0, j_1$ such that the $s$-th bit in the top, the left and the right macro-colors of the $k$-level macro-tile at the positions $(i,j_0)$ and $(i,j_1)$ in the $(k+1)$-level father macro-tile  is equal to  $0$ and $1$ respectively.

There are many (more or less artificial) ways to realize this constraint. For example, we may subdivide the array of $r_k$ bits in three equal zones of size $r_k/3$ and require that for each macro-tile only one of these three zones contains the ``meaningful'' bits, and two other zones contain only zeros and ones respectively; we require then that the ``roles'' of these three zones cyclically exchange as we go upwards along a column of macro-tiles.

\subsection{Enforcing quasiperiodicity}

To achieve the property of quasiperiodicity, we should guarantee  that every finite pattern that appears once in a tiling, must appear in each large enough square.  
If a tileset $\tau$ is self-similar, then in every $\tau$-tiling  each finite pattern  can be covered by at most $4$  macro-tiles  (by a $2\times2$-pattern)  of an appropriate rank. Thus, it is enough to show that every $2\times 2$-block of macro-tiles of any rank $k$ that appears in at least one $\tau$-tiling, actually appears in this tiling in every large enough square.

\smallskip
\emph{Case 1: skeleton tiles.}
For a $2\times 2$-block of four ``skeleton'' macro-tiles of level $k$ this is easy. Indeed, we have exactly the same blocks with  every vertical shift multiple of  $L_{k+1}$ (we have there a similar block of $k$-level ``skeleton'' macro-tiles within another  macro-tile of rank $(k+1)$). A vertical shift does not change the embedded letters in the zone of responsibility, so we can apply Lemma~\ref{lemma-clones}. 

To find a similar block of $k$-level ``skeleton'' macro-tiles with a different abscissa coordinate, we need a horizontal shift $Q$ which is divisible by $L_{k+1}$  (to preserve the position in the father macro-tile) and at the same time does not change the letters embedded in the zone of responsibility. This is possible due to Lemma~\ref{lemma-quasiperiodic-times-periodic}, if the embedded sequence is quasiperiodic. Given a suitable horizontal shift,  we can again apply Lemma~\ref{lemma-clones}.

\smallskip
\emph{Case 2: communication wires.}
Let us consider the case when a $2\times 2$-block of $k$-level macro-tiles involves a part of a communication wire. Due to the property (p3) we may assume that only one wire is involved. The bit transmitted by this wire is either $0$ or $1$; in both cases, due to the property (p4)
we can find another similar $2\times 2$-block of $k$-level macro-tiles (at the same position within the father macro-tile of rank $(k+1)$ 
and with the same bit included in the communication wire)  in  every macro-tile of level $(k+2)$. 
In this case we need a vertical shift  longer than in Case~1: we can find a duplicate of the given block with a vertical shift of size $O(L_{k+2})$.  

As in  Case~1, any vertical shift does not change  the letters embedded in the zone of responsibility of the involved macro-tiles, and we can apply  Lemma~\ref{lemma-clones} immediately.  If we are looking for a horizontal shift, we again use quasiperiodicity of the simulated shift and apply Lemma~\ref{lemma-quasiperiodic-times-periodic}: there exists a horizontal shift that is divisible by $L_{k+2}$ and  does not change the letters embedded in the zone of responsibility. Then we again apply Lemma~\ref{lemma-clones}.

\smallskip
\emph{Case 3: computation zone.} Now we consider the most difficult case: when  a $2\times 2$-block of $k$-level macro-tiles touches the computation zone.
In this case we cannot obtain the property of quasiperiodicity for free, and we have to make one more (the last one) modification  of our general construction of a self-simulating tiling.

\begin{figure}
\includegraphics[scale=0.5]{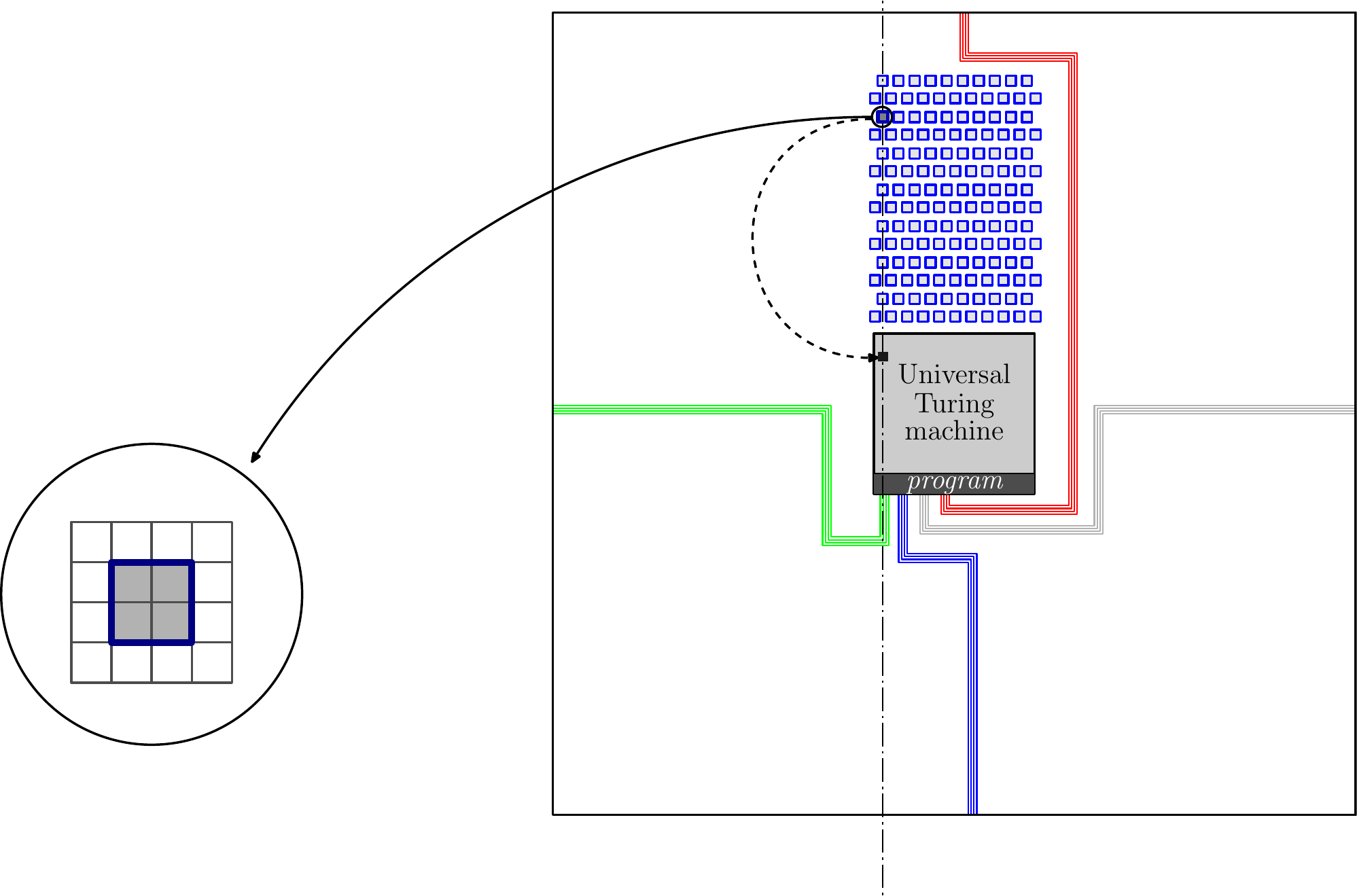}
\label{pic6}
\caption{Positions of the slots for patterns from the computation zone.}
\end{figure}

 Notice that for each $2\times2$-window that touches the computation zone  of a macro-tile there exist only $O(1)$ ways to tile them correctly. For each possible position of a $2\times2$-window in the computation zone and for each possible filling of this window by tiles, we reserve a special $2\times2$-\emph{slot} in a macro-tile, which is essentially a block of size $2\times2$ in the ``free'' zone of a macro-tile. It must be placed   far away from the computation zone and from all communication wires, but in the same vertical stripe as the ``original'' position of this block, see Fig.~5.
 We have enough free space to place all necessary  slots due to the property (p1). We  define the neighbors around this slot in such a way that only one specific $2\times 2$ pattern  can patch it (here we use the property (p2)).

 \begin{wrapfigure}{r}{0.5\textwidth}
\centering
\vspace{-5pt}
\includegraphics[scale=0.33]{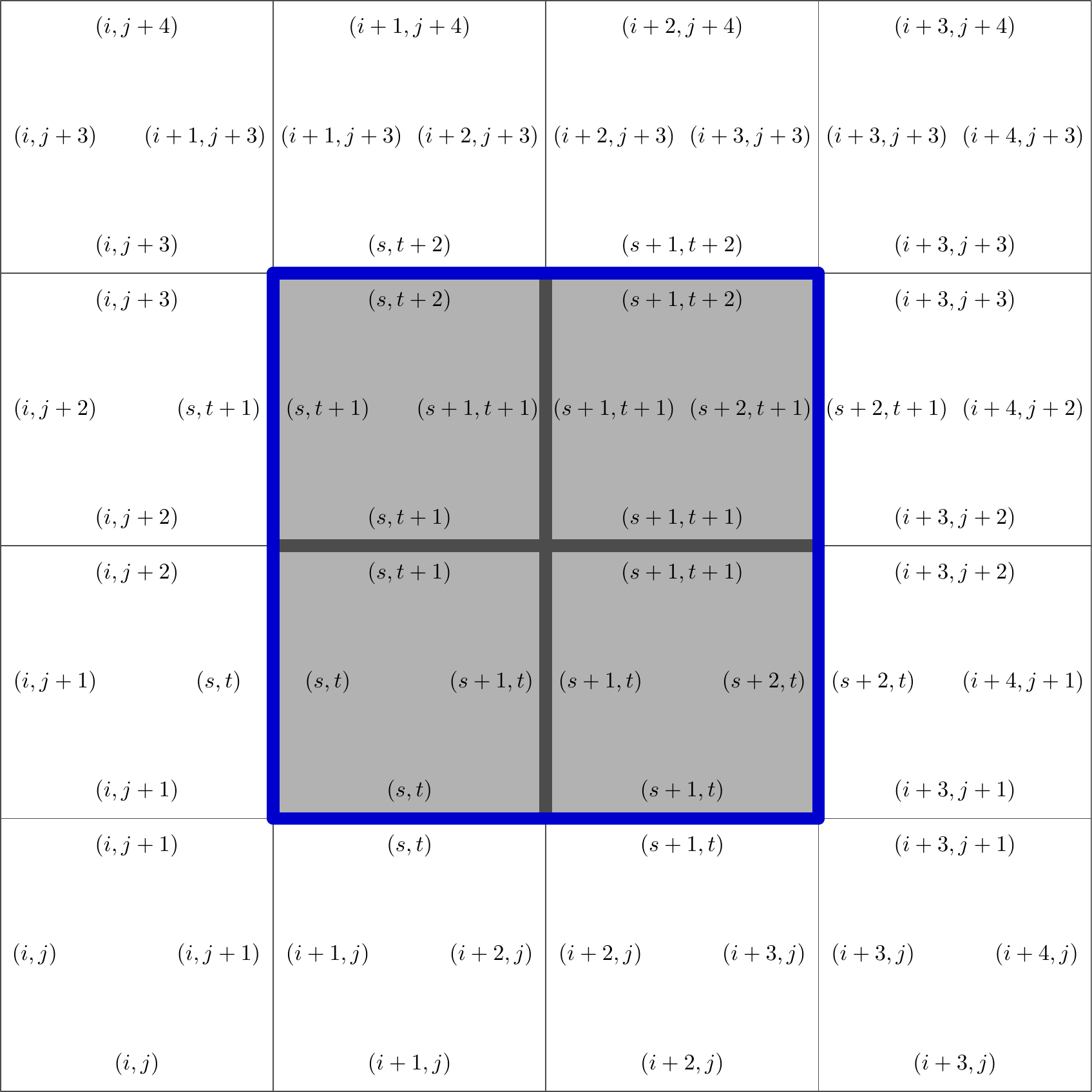}
\caption{A slot for a {$2\times2$}-pattern from the computation zone.}
\vspace{-5pt}
\end{wrapfigure}

In our construction the tiles around this  slot ``know'' their real coordinates in the bigger macro-tile, while the tiles inside the slot do not (they ``believe'' they are tiles in the computation zone, while in fact they belong  to an artificial isolated  \emph{diversity preserving} ``slot'' far outside of any real computation), see Fig.~5 and Fig.~6.
 The frame of the slot consists of 12 ``skeleton'' tiles (the white squares in Fig.~6), they form a slot a $2\times 2$-pattern  from the computation zone (the grey squares in Fig.~6). 
In the picture we show the ``coordinates'' encoded in the colors on the sides of each tile.  
Notice that the colors of the bold lines (the blue lines between white and grey tiles and the bold black lines between grey tiles) should contain some information beyond coordinates --- these colors involve the bits used to simulate a space-time diagram of the universal Turing machine. 
(We do not show all the corresponding bits explicitly.) 
In this picture, the ``real'' coordinates of the bottom-left corner of this slot are $(i+1,j+1)$, 
while the ``natural'' coordinates of the  pattern (when it appears in the computation zone) are $(s,t)$. 

We choose the positions of the ``slots'' in the macro-tile so that coordinates can be computed with a short program in time polynomial in $\log N$. We require that all slots are isolated from each other in space, so they do not damage the general structure of ``skeleton'' tiles building the macro-tiles.

Through construction, each of these slots is aligned with the ``natural'' position of the corresponding $2\times 2$-block in the computation zone. This guarantees that the tiles in the computation zone and their ``sibling'' in the artificial slots  share the same bits of the embedded sequences in the corresponding zone of responsibility. We have defined the slots so that the ``conscious memory'' of the tiles in the computation zone and in the corresponding slots is the same. 
Thus, we can apply Lemma~\ref{lemma-clones} and conclude that a  $2\times 2$-blocks in diversity preserving slots are exactly equal to the corresponding $2\times 2$-patterns in the computation zone.  

For a horizontal shift, similarly to the Cases~1--2 above, we use quasiperiodicity of the embedded sequences and apply Lemma~\ref{lemma-quasiperiodic-times-periodic}.

\smallskip
\emph{Concluding remark:} Formally speaking, we  proved Lemma~\ref{lemma-clones} before we introduced the last upgrades of our tileset. However, it is easy to verify that the updates  of the main construction discussed in this Section do not affect the proof of that lemma.
\smallskip

Thus, we constructed a tileset $\tau$ such that every $L_k\times L_k$ pattern that appears in a $\tau$-tiling must also appear in every large enough square in this tiling. So, the constructed tileset satisfies the requirements of Theorem~\ref{thm-main}.

\begin{proof}[The proof of Corollary~\ref{thm-kolmogorov}.]
To prove Corollary~\ref{thm-kolmogorov} we only need to combine Theorem~\ref{thm-main} with a fact from  \cite{rumyantsev-ushakov}:  there exists a $1$D shift $\cal S$ that is quasiperiodic, and   for every configuration $\mathbf{x}\in{\cal S}$ the Kolmogorov complexity of all factors is linear, i.e.,
 $
 K(x_{i}x_{i+1}\ldots x_{i+n})  = \Omega(n)
 $
for all $i$.
\end{proof}

%
\noindent\textbf{The proof of Theorem~\ref{thm-main-min}.}
First of all we notice that the proof of Theorem~\ref{thm-main} discussed above does not imply Theorem~\ref{thm-main-min}. 
If we take an effective 
 minimal $1\mathrm{D}$-shift $\cal A$ and plug it into the construction form the proof of  Theorem~\ref{thm-main}, we obtain a tileset  $\tau$
(simulating $\cal A$) which is \emph{quasiperiodic} but not necessary \emph{minimal}. The property of minimality can be lost even for a \emph{periodic} shift $\cal A$. Indeed, assume that the minimal period $t>0$ of the configurations in $\cal A$ is a factor of the size $N_k$ of $k$-level macro-tiles in our self-simulating tiling, then  we can extract from the resulting  SFT $\tau$ nontrivial shifts $T_i$, $i=0,1,\ldots, t-1$ corresponding to the position of the embedded 
$1\mathrm{D}$-configuration with respect to the grid of macro-tiles.
To overcome this obstacle we will superimpose some additional constraints on the embedding of the simulated $\mathbb{Z}$-shift in a $\mathbb{Z}^2$-tiling. Roughly speaking, we will enforce only ``standard'' positioning of the embedded  $1\mathrm{D}$ sequences with respect to the grid of macro-tiles. This will not change the class of configurations that can be simulated (we still get all configurations from a given minimal shift $\cal A$), but the classes of all valid tilings will reduce to some minimal $\mathbb{Z}^2$-SFT.

\emph{The standardly aligned grid of macro-tiles:} In general, the hierarchical structure of macro-tiles permits non-countably many ways of cutting the plane in macro-tiles of different ranks. We fix one particular version of this hierarchical structure and say that a grid of macro-tiles is \emph{standardly aligned}, if for each level $k$ the point $(0,0)$ is the bottom-left corner of a $k$-level macro-tile
(see Fig.~7). 
 This means that the tiling is cut into $k$-level macro-tiles of size $L_k\times L_k$ by vertical lines with abscissae $x=L_k\cdot t'$ and ordinates  $y=L_k\cdot t''$, with $t', t''\in \mathbb{Z}$
(so the vertical line $(0,*)$ and the horizontal line $(*,0)$ serve as separating lines for macro-tiles of all ranks). 
Of course, this structure of macro-tiles is computable.

\begin{wrapfigure}{L}{0.5\textwidth}
\centering
\vspace{0pt}
\includegraphics[scale=0.42]{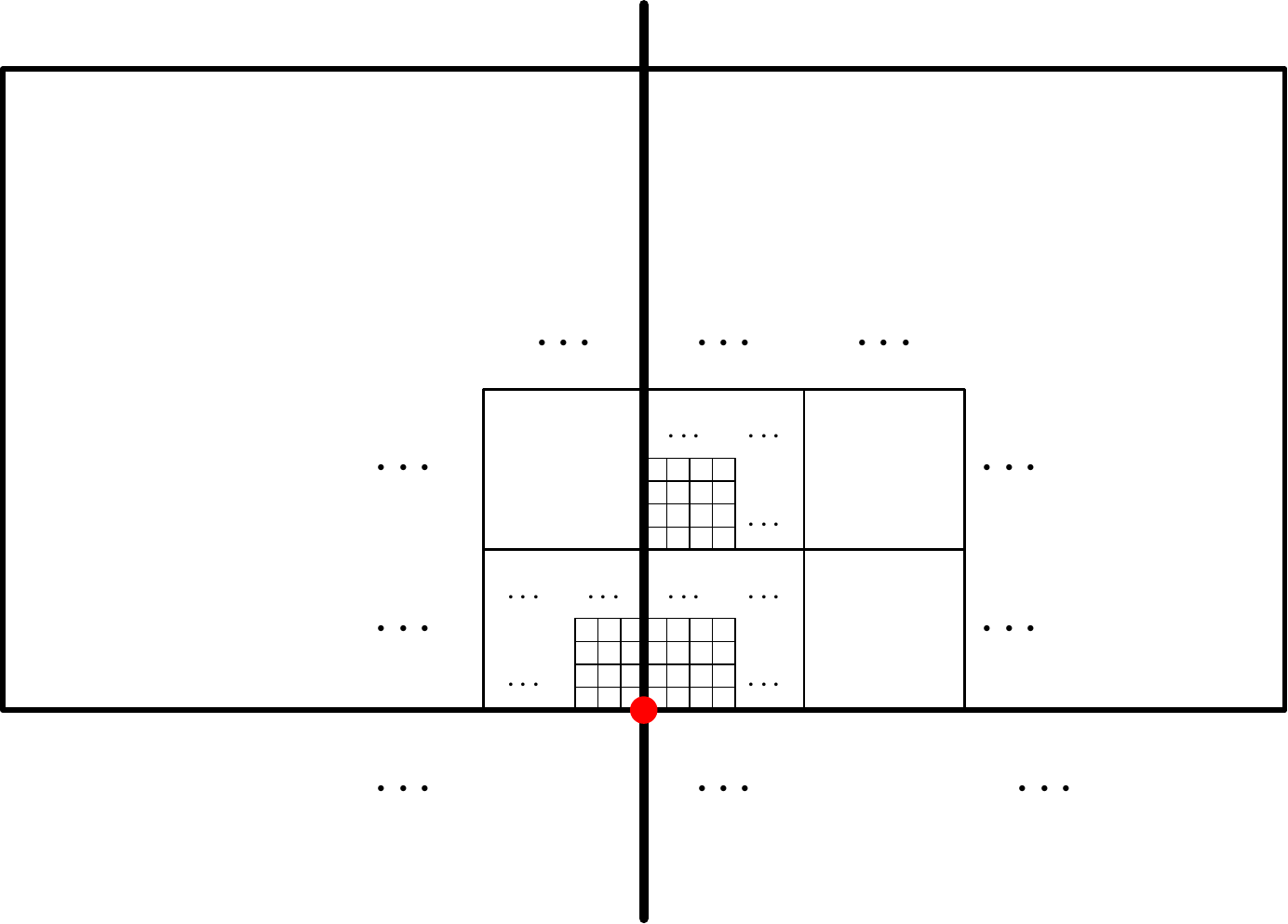}
\caption{The standardly aligned grid of macro-tiles: the central point (marked red) is a corner of macro-tiles of each rank.}
\vspace{-5pt}
\end{wrapfigure}

\smallskip

\emph{The canonical representative of a minimal shift:} A minimal effectively-closed  $1\mathrm{D}$-shift $\cal A$ is always computable, i.e., the set of finite patterns that appear in configurations of this shift is computable. It follows immediately that $\cal A$ contains some computable configuration. Let us fix one computable configuration $\mathbf{x}$; in what follows we call it  \emph{canonical}.

\smallskip

\emph{The standard  embedding of the canonical representative:} We  superimpose the standardly aligned grid of macro-tiles with the canonical representative of a minimal shift $\cal A$:  we take the direct product of the hierarchical structures  of the standardly aligned  grid of macro-tiles
with  the {canonical} configuration $\mathbf{x}$ from ${\cal A}$  (that is, each tile with coordinates $(i,j)$ ``contains'' the letter $x_i$ from the {canonical} configuration).

\smallskip

\emph{Claim 1:} Given a pattern $w$ of size $n\le L_k$ and an integer $i$, we can algorithmically verify whether the factor $w$ appears in the standard embedding of the canonical representative with the shift $(i \mod L_k)$ relative to the grid of $k$-level macro-tiles. This follows from Lemma~\ref{lemma-minimal-times-periodic}(b) applied to the superposition of the canonical representative with the  
periodical grid of $k$-level macro-tiles).

\emph{Remark:} This verification procedure is computable, but its computational complexity can be very high. 
To perform the necessary  computation we may need space and time much bigger than the length of $w$ and $L_k$.

\smallskip

\emph{Upgrade of the main construction:} Let us update the construction of self-simulating tiling from the proof of Theorem~\ref{thm-main}. So far we assumed that every macro-tile (of every level $k$) verifies that the delegated factor of the embedded sequences contains no factors forbidden for the shift $\cal A$. Now we make the constraint stronger: we require that the delegated factor contains only factors allowed in the shift $\cal A$ and placed in the positions (relative to the grid of macro-tiles) permitted for factors in the standard  embedding of the canonical representative. This property is computable (Claim~1), so every  forbidden pattern or a 
pattern in a forbidden position will be discovered in a computation in a macro-tile of some rank.

The computational complexity of this procedure can be very high (see Remark after Claim~1), and we cannot guarantee that the forbidden patterns of small length are discovered by the computation in macro-tiles of small size. But we do guarantee that each forbidden pattern or a pattern in a forbidden position is discovered by a computation in some macro-tile of\emph{ high enough} rank.

\smallskip

\emph{Claim 2:} The new tileset admits correct tilings of the plane. Indeed, at least one tiling is valid by the construction: the standard  embedding of the canonical representative corresponds to a valid tiling of the plane,  since macro-tiles of all rank never find any forbidden placement of patterns in the embedded sequence.

\smallskip

\emph{Claim 3:} The new tileset simulates the shift $\cal A$. This follows immediately from the construction: the embedded sequence must be a configuration from $\cal A$.

\smallskip

\emph{Claim 4:} For the constructed tileset $\tau$ the set of all tilings is a minimal shift. We need to show that \emph{every} $\tau$-tiling contains all patterns that can appear in \emph{at least one} $\tau$-tiling. Similarly to the proof of  Theorem~\ref{thm-main}, it is enough to prove this property for $2\times2$-blocks of 
$k$-level macro-tile. The difference with the argument in the previous section is that for every $2\times2$-block  of macro-tiles in one tiling $T$ we must find a similar block of macro-tiles in another tiling $T'$, so that this block has exactly the same position with respect to father macro-tile $\cal M$ of rank $(k+1)$,  and $\cal M$ and $\cal M'$  own exactly the same factor of the embedded sequence in their zones of responsibility. This is always possible due to Lemma~\ref{lemma-minimal-times-periodic}(a) (applied to the canonical representative of $\cal A$ superimposed with the periodical grid of $(k+1)$-level macro-tiles). This observation concludes the proof.

\medskip
\noindent
\textbf{Acknowledgments.} 
 We are 
 indebted to Emmanuel Jeandel for raising  and motivating the questions
  which led to this work.
We  are grateful to  Gwena\"el Richommes and Pascal Vanier  
 for  fruitful discussions and 
 to the anonymous reviewers of the MFCS-2017 for truly valuable  comments.


\bibliography{arxiv_revised}

\appendix

\section{A more detailed explanation of the fixed-point tiling}

In this appendix we explain in more detail the construction of a ``self-simulating'' tileset from \cite{drs}. Basically, this appendix is 
an elaborate version of the preamble of Sections~2.

\subsection{The relation of simulation for  tilesets}

Let $\tau$ be a tileset and $N>1$ be an integer. We call a \emph{macro-tile} an $N \times N $ square  tiled by matching tiles from $\tau$. Every side of a $\tau$-macro-tile contains  a sequence of $N$ colors (of tiles from $\tau$); we refer to this sequence as a \emph{macro-color}.
Further, let $T$ be some set of $\tau$-macro-tiles (of size $N\times N$). We say that $\tau$ \emph{implements} $T$ with a  \emph{zoom factor} $N$,
if (i) some $\tau$-tilings exist, and (ii) for every $\tau$-tiling there exists a unique lattice of vertical and horizontal lines that cuts this tiling into $N\times N$ macro-tiles from $\rho$.  A tileset $\tau$ \emph{simulates} another tileset $\rho$, if $\tau$ implements a set of macro-tiles $T$ (with a zoom factor $N>1$) that is isomorphic to $\rho$, i.e.,  there exists a one-to-one correspondence between $\rho$ and $T$ such that the matching pairs of $\rho$-tiles correspond exactly to the matching pairs of $T$-macro-tiles.  A  tileset $\tau$ is called \emph{self-similar} if it simulates itself.

If a  tileset $\tau$ is self-similar, then all $\tau$-tilings have a hierarchical structure. Indeed,  each $\tau$-tiling can be uniquely split into $N\times N$ macro-tiles from a set $T$,  and these macro-tiles are  isomorphic to the initial tileset $\tau$.  Further, the grid of macro-tiles can be uniquely grouped into blocks of size $N^2\times N^2$, where each block is a macro-tile of rank $2$ (again, the set of all macro-tiles of rank $2$ is isomorphic to the initial tileset $\tau$), etc. It is not hard to deduce  that a self-similar tileset $\tau$ has only aperiodic tilings (for more detail see  \cite{drs}).  Below, we discuss a generic construction of self-similar tilesets.

\subsection{Simulating a tileset defined by a Turing machine}

Let us have a tileset $\rho$ where each color is a $k$-bit string (i.e., the set of colors $C \subset \{0,1\}^k$) and the set of tiles $\rho \subset C^4$ is presented by a predicate $P(c_1,c_2,c_3,c_4)$ (the predicate is true if and only if the quadruple $(c_1,c_2,c_3,c_4)$ corresponds to a tile from $\rho$). Let us have some Turing machine $\cal M$ that computes $P$. In what follows we present a general construction that allows us to simulate  $\rho$ by some other tileset $\tau$, with a large enough zoom factor $N$.

 \begin{figure}
\center
\center
\includegraphics[scale=0.50]{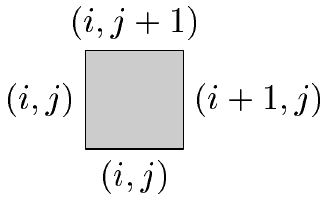}%
\rule{2.0cm}{0.0cm}%
\includegraphics[scale=0.8]{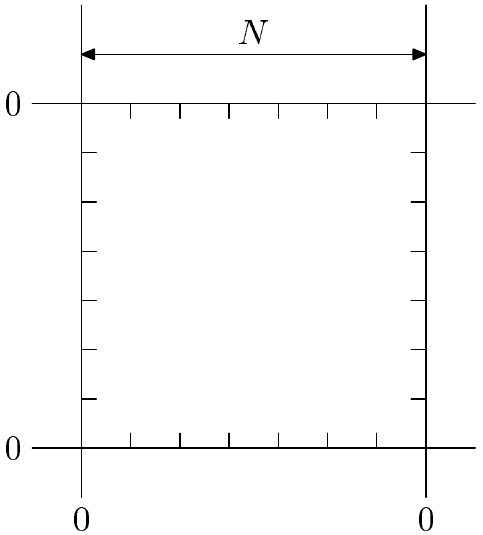}
\caption{A tile with coordinates on the sides and the macro-tile grid.}
\end{figure}

We  build a tileset $\tau$ where each tile ``knows'' its coordinates modulo $N$ in the tiling. This information is included in the tile colors. More precisely, for a tile that is supposed to have  coordinates $(i,j)$ modulo $N$, the colors on the left and on the bottom sides should involve $(i,j)$, the color on the right side should involve $(i+1\mod N, j)$, and the color on the top side, respectively, involves $(i, j+1 \mod N)$, see Fig.~8.
This means that every $\tau$-tiling can be uniquely split into blocks (macro-tiles) of size $N\times N$, where the coordinates of cells range from $(0,0)$ in the bottom-left corner to $(N-1,N-1)$ in top-right corner, Fig.~8. Intuitively, each tile ``knows'' its position in the corresponding macro-tile.

 We require that in addition to the coordinates, each tile in $\tau$  has some supplementary information   encoded in the colors on its sides.  On the border of a macro-tile (where one of the coordinates is zero) we assign to the colors of tiles one additional bit of information. Thus, for each macro-tile of size $N\times N$ the corresponding macro-colors can be  represented as strings of $N$ zeros and ones.  Further, we assume that $k\ll N$.  We allocate  $k$ positions in the middle of a macro-tile sides and make them represent colors from $C$; all other bits on the sides of a macro-tile are zeros.
  
 Now we introduce additional restrictions on tiles in $\tau$ that will guarantee that the macro-colors on the macro-tiles satisfy the ``simulated'' relation $P$. To this end we ensure that bits from the macro-tile side are transferred to the central part of the tile, and the central part of a macro-tile is used to simulate  a  computation of the predicate $P$.
 
We fix which cells in a macro-tile are ``communication wires'' 
and then require that these tiles carry the same (transferred) bit on two sides. The central part of a macro-tile (of size, say $m\times m$, where $m\ll N$) should represent a time-space diagram of the machine $\cal M$ (the tape is horizontal, and time goes up). This is done in a standard way. We require that computation terminates in an accepting state (if not, no correct tiling can be formed).

  To make this construction work, the size of a macro-tile (the integer $N$) should be large enough. First, we need enough room to organize the ``communication wires'' that transfer the bits of macro-colors to the ``computational zone''; second, we need enough time  and space  in the computational zone so that all accepting computations of $\cal M$ terminate in time $m$ and on space $m$.
 
In this construction the number of additional  bits encoded in colors of tiles
depends on the choice of the machine $\cal M$. To avoid this dependency, we replace $\cal M$ by a fixed universal Turing machine $\cal U$ that runs a program simulating $\cal M$.  We may assume that the tape has an additional read-only layer. Each cell of this layer carries a bit that never  changes during the computation; these bits are used as a program for the universal machine. So in the computation zone the columns carry unchanged bits; the construction of a tileset  guarantees that these bits form the program for $\cal U$, and the computation zone of a macro-tile represents a view of an accepting computation for that program, see Fig.~1. In this way we get a tileset $\tau$ that has $O(N^2)$ tiles and simulates $\rho$. This construction works for all large enough $N$.

Of course, the tileset $\tau$ depends on the program simulated in the computational zone. However, this dependency is very limited. The simulated program (and, implicitly, the predicate $P$) affects only the rules for the tiles used in the bottom line of the computational zone. The colors on the sides of all other tiles are universal and do not depend on the simulated tileset~$\rho$.

\subsection{Self-simulation with Kleene's recursion trick}

We have explained  how to implement a given tileset $\rho$ by another tileset $\tau$ with large enough zoom factor $N$. Now we want $\tau$ be isomorphic to $\rho$. This can be done using a trick similar to the proof of Kleene's recursion theorem.
We have noticed above that most steps of the construction of  $\tau$ do not depend on the program for $\cal M$. Let us fix these rules as a part of $\rho$'s definition and set $k = 2 \log N + O(1)$, so that we can encode $O(N^2)$ colors by $k$ bits.  From this definition we  obtain a program $\pi$ that takes $N$ as an input and  that checks that macro-tiles behave like $\tau$-tiles in this respect. We are almost done with the program $\pi$. The only remaining part of the rules for $\tau$ is the hardwired program. We need to guarantee that the computation zone in each macro-tile carries the very same program $\pi$. But since the program (the list of instructions interpreted by the universal Turing machine) is written on the tape of the universal machine, this program can be instructed to access the bits of its own text  and check that if a macro-tile belongs to the computation zone, this macro-tile carries the correct bit of the program.

It remains to choose the parameters $N$ and $m$. We need them to be large enough so the computation described above   (which deals with inputs of size $O(\log N)$) can fit in the computation zone. The computations are rather simple (polynomial in the input size, i.e., polynomial in $O(\log N))$, so they certainly fit in space and time bounded by $m=\poly(\log N)$. This completes the construction of a self-similar aperiodic tileset.
Now, it is not hard to verify that the constructed tilesets (i) allow a tiling of the plane, and (ii) each tiling is self-similar.

\subsection{A more flexible construction: the choice of the zoom factor}

The construction described above works well for all large enough zoom factors $N$. In other words, for all large enough $N$ we get a self-similar tileset $\tau_N$, and the tilings for all these $\tau_N$ have very similar structure, with macro-tiles as shown in Fig.~2. We assume that the position of the ``computational zone'' and the ``communication wires'' in a macro-tile are defined by some simple natural rules, so the ``geometry'' of  macro-tiles for $\tau_N$ can be easily computed given $N$.

So far we have assumed that $N$ is hardwired in the Turing machine $\pi_N$ that computes the predicate defining our self-similar tileset $\tau_N$. Technically, the machine $\pi_N$ takes a tuple of $4$ strings (the strings of bits of length $k=k(N)$ representing the $4$ macro-colors of a macor-tile) as the input  and checks whether these strings represent four colors of one tile in our self-similar tileset $\tau_N$.

Notice that while all $\pi_N$ do substantially the same computations, their dependence of $N$ is quite limited.
It seems instructive to separate the parameter $N$ from the program, so we can slightly update our construction. Instead of many individual programs $\pi_N$ (one program for each zoom factor) we take one generic program $\pi$ that gets the binary expansion of $N$ as an input, and then computes a self-similar tileset with macro-tiles of size $N\times N$. More precisely, $\pi$ takes five inputs: a binary expansion of $N$ and a four strings of $k=k(N)$ bits representing the macro-colors of a macro-tile. Now we require that the binary expansion of $N$ is written on the tape of the simulated Turing machine (in the bottom line of the computational zone, as well as the text of the program $\pi$).  The program can access this values of $N$ while performing the computation. Among other things, the program $\pi$ guarantees that the next level macro-tiles contain the same text of the program and the same parameter $N$ in their computational zones.

\subsection{Generalized self-simulation: variable zoom factor}

The construction of a self-similar tiling can be easily generalized to obtain a \emph{variable zoom factor}. 
 This means that macro-tiles of different ranks are not literally isomorphic to the tiles of the ground level.  We can organize the self-simulation so
that the size of the macro-tiles of rank $k$ in the hierarchy is equal to $N_k\times N_k$, for some suitable sequence of zooms $N_k$, $k=1,2,\ldots$ 
In our principal construction we assume that the zoom factor grows rather fast,  $N_k = C^{3^k}$ for some constant $C$.   To implement this construction, we need that each macro-tile of rank $k$ ``knows'' its own rank. Technically, we require now that the binary expansion of the rank $k$ is written on the tape of the Turing machine simulated on the computation zone. This data is used by a macro-tile to simulate the next level macro-tiles properly.  The size of the computational zone $m_k$ should also grow as a  function of rank $k$ (and be easily computable from $k$); as before, we assume that $m_k = \poly(\log N_k)$.

Thus, in the updated construction the first line of the computational zone contains the following \emph{fields of the input data}:
 \begin{itemize}
  \item[(i)] the program of the simulated Turing machine
 \item[(ii)] the binary expension of the rank of this macro-tile (the level of the hierarchy), $k=1,2,\ldots$,
 \item[(iii)] the bits encoding the macro-colors,  which consist of  the coordinates in its ``father'' macro-tile of rank $(k+1)$ (modulo $N_{k+1}$) plus some
 $O(1)$ supplementary bits of information.
 \end{itemize}
 Notice that we do not need to provide the value of $N_k$ explicitly as an input of  the computation,  since it can be computed from the rank $k$. 
For more detail see \cite{drs}. 

Besides the principal computation required for the sake of self-simulation, we can embed in the computational zones of macro-tiles some supplementary ``payload'' --- some ``useful'' computation that has nothing to do with self-simulation. Since the zoom factor grows with the rank, on each next level we can allocate more and more space and time to this secondary computation process.

\end{document}